\definecolor{mygray}{gray}{0.85}
\newtheorem{theorem}{Theorem}{}
{}
\newtheorem{remark}{Remark}{}
\newtheorem{observation}[theorem]{Observation}
\newcolumntype{I}{!{\vrule width 1.25pt}}
\newlength\savedwidth
\newlength\savewidth
\newcommand\shline{\noalign{\global\savewidth\arrayrulewidth
		\global\arrayrulewidth 1.25pt}%
	\hline
	\noalign{\global\arrayrulewidth\savewidth}}
\begin{document}

\title{\Huge Reconfigurable Intelligent Surface-Assisted Aerial-Terrestrial Communications \\via Multi-Task Learning}

\author{
\IEEEauthorblockN{Xuelin Cao, Bo~Yang,~\IEEEmembership{Member,~IEEE},
 Chongwen Huang,~\IEEEmembership{Member,~IEEE}, 
 Chau Yuen,~\IEEEmembership{Fellow,~IEEE},
 Marco Di Renzo, \IEEEmembership{Fellow,~IEEE},  
 Dusit Niyato,~\IEEEmembership{Fellow,~IEEE},
 and Zhu Han,~\IEEEmembership{Fellow,~IEEE}}
 
\thanks{X. Cao, B. Yang, and C. Yuen are with the Engineering Product Development Pillar, Singapore University of Technology and Design, Singapore 487372. (e-mails: xuelin$\_$cao, bo$\_$yang, yuenchau@sutd.edu.sg).}
\thanks{C. Huang is with College of Information Science and Electronic Engineering, Zhejiang University, Hangzhou 310027, China, and with International Joint Innovation Center, Zhejiang University, Haining 314400, China, and also with Zhejiang Provincial Key Laboratory of Info. Proc., Commun. $\&$ Netw. (IPCAN), Hangzhou 310027, China. (e-mail: chongwenhuang@zju.edu.cn).}
\thanks{M. Di Renzo is with Universit\'e Paris-Saclay, CNRS, CentraleSup\'elec, Laboratoire des Signaux et Syst\`emes, 3 Rue Joliot-Curie, 91192 Gif-sur-Yvette, France. (e-mail: marco.di-renzo@universite-paris-saclay.fr).}
\thanks{D. Niyato is with the School of Computer Science and Engineering, Nanyang Technological University, Singapore. (e-mail: dniyato@ntu.edu.sg).}
\thanks{Z. Han is with the Department of Electrical and Computer Engineering in the University of Houston, Houston, TX 77004 USA, and also with the Department of Computer Science and Engineering, Kyung Hee University, Seoul, South Korea, 446-701. (e-mail: zhan2@uh.edu).}
}
\maketitle

\begin{abstract}
The aerial-terrestrial communication system constitutes an efficient paradigm for supporting and complementing terrestrial communications. However, the benefits of such a system cannot be fully exploited, especially when the line-of-sight (LoS) transmissions are prone to severe deterioration due to complex propagation environments in urban areas. The emerging technology of reconfigurable intelligent surfaces (RISs) has recently become a potential solution to mitigate propagation-induced impairments and improve wireless network coverage. Motivated by these considerations, in this paper, we address the coverage and link performance problems of the aerial-terrestrial communication system by proposing an RIS-assisted transmission strategy. In particular, we design an adaptive RIS-assisted transmission protocol, in which the channel estimation, transmission strategy, and data transmission are independently implemented in a frame. On this basis, we formulate an RIS-assisted transmission strategy optimization problem as a mixed-integer non-linear program (MINLP) to maximize the overall system throughput. We then employ multi-task learning to speed up the solution to the problem. Benefiting from multi-task learning, the computation time is reduced by about four orders of magnitude. Numerical results show that the proposed RIS-assisted transmission protocol significantly improves the system throughput and reduces the transmit power. 
\end{abstract}

\begin{IEEEkeywords}
    Reconfigurable intelligent surface, aerial-terrestrial communications, RIS-assisted transmission protocol, multi-task learning. 
	\end{IEEEkeywords}

\IEEEpeerreviewmaketitle

\section{Introduction}

With the emergence of space-air-ground integrated networks, aerial-terrestrial communications play an increasingly significant role in providing ubiquitous coverage and offering flexible end-to-end services \cite{liu2018space, zhang2017software, kato2019optimizing}. Unmanned aerial vehicles (UAVs), commonly known as aerial platforms, have been widely leveraged to complement and strengthen existing terrestrial networks. Compared to traditional terrestrial communications, UAV-enabled aerial-terrestrial communications have attracted increasing interest due to their prominent features of fully controllable mobility, line-of-sight (LoS) transmission, and cost-effectiveness \cite{zeng2016wireless,wu2019fundamental,cui2018robust}. In particular, UAVs are gaining popularity in various applications such as search and rescue, cargo/packet delivery, communication platforms, precise agriculture, etc. In general, these promising UAV-enabled applications can be classified into three types: UAV-enabled ubiquity coverage, UAV-enabled relaying, and UAV-enabled information dissemination/data collection \cite{zeng2017energy}. This paper mainly focuses on UAV-enabled ubiquity coverage and information dissemination/data collection, where the UAV-enabled system is employed to provide seamless wireless coverage within its serving area.

However, considering the complicated and unpredictable propagation environment in urban areas, conventional UAV-enabled aerial-terrestrial communications may face several challenges. The first prominent challenge is the blockage of LoS links due to obstacles that may lead to coverage and connectivity problems. Another critical challenge is the non-stationary nature of aerial-terrestrial communications because of the high mobility of UAVs. In addition, the severe path loss caused by long-distance transmissions has a significant impact on the transmission rate. Recently, the technology of reconfigurable intelligent surfaces (RISs) has been proved as a potential solution for tackling the above challenges in wireless communications. Specifically, an RIS consists of a large number of passive reflecting elements that can be controlled to adjust the amplitude and$/$or phase of the incident signals, thus dynamically reconfiguring the wireless propagation environment to provide desirable propagation properties and diverse transmission channels \cite{di2020hybrid,wu2019towards,9140329,huang2020holographic,di2019smart,add4}. Thanks to the promising characteristics of RISs, we explore an RIS-assisted aerial-terrestrial communication system where the RIS elements can be dynamically optimized for performance improvement. Furthermore, an efficient RIS-assisted transmission protocol based on multi-task learning for application to aerial-terrestrial communications is designed to achieve higher data rate, higher flexibility, and more reliable transmission and coverage.

\subsection{Related Works}
The increasing demand for broadband wireless communications has led to the enormous success of terrestrial networks. However, the high unpredictability of wireless environments poses serious challenges to the deployment of terrestrial networks. Aerial-terrestrial communications offer the flexibility to overcome some of the limitations of terrestrial communications, e.g., ubiquitous coverage\cite{zeng2019accessing}. In this context, UAVs can be employed as aerial base stations (BSs), access points, or relays, to assist terrestrial communications\cite{zeng2018cellular}. In particular, UAV-enabled aerial-terrestrial communications in which the UAVs are used as BSs are envisioned as a promising solution to complement terrestrial communications \cite{lin2018sky,fotouhi2019survey,zhang2019cellular}. Specifically, Wu \textit{et al.} exploited the UAV mobility to achieve user fairness \cite{wu2018joint}, and to improve the total capacity of UAV-enabled multi-user communication systems \cite{wu2018capacity}. Zeng \textit{et al.} considered the energy-efficiency and UAV mission completion time of UAV-enabled aerial-terrestrial communications \cite{zeng2017energy,zeng2018trajectory}. Mohammad \textit{et al.} investigated the mission completion time of UAV as flying BSs \cite{8053918} and proposed a tractable analytical framework for the coverage and rate analysis \cite{7412759}. Boris \textit{et al.} analyzed the impact of the UAV altitude that is affected by the UAV channel and directional antenna \cite{galkin2017coverage}. Pang \textit{et al.} employed UAVs to collect data and recharge sensor nodes \cite{pang2014efficient}. You \textit{et al.} investigated the data collection efficiency of UAV-enabled wireless sensor networks \cite{you20193d}. Gong \textit{et al.} solved the optimal UAV flight time for interval data collection \cite{gong2018flight}. Chen \textit{et al.} discussed the deployment problem of cache-enabled UAV in a cloud radio access network \cite{chen2017caching}. 

In the context of RIS-aided communications, Wu \textit{et al.} studied the problem of joint active and passive beamforming \cite{wu2018intelligent}. To achieve high energy efficiency, Huang \textit{et al.} investigated an RIS-assisted downlink multi-user system by joint optimizing the transmit power and the passive beamforming \cite{huang2019reconfigurable,huang2018energy}. To increase the sum-rate, Guo \textit{et al.} studied an RIS-aided multi-user multiple-input single-output downlink system by jointly designing the beamforming and RIS phase shifts \cite{guo2020weighted}. To assess the effect of RIS phase shifts on the data rate, Zhang \textit{et al.} derived the required number of phase shifts for RIS-assisted communication systems \cite{zhang2020reconfigurable}. Yang \textit{et al.} proposed a practical transmission protocol for RIS-enhanced orthogonal frequency division multiplexing (OFDM) system \cite{yang2020intelligent}, and Zheng \textit{et al.} extended these investigations to reduce the channel estimation overhead \cite{zheng2019intelligent}. Wei \textit{et al.} analyzed the RIS channel estimation in terms of accuracy and overhead \cite{wei2020channel}. To improve the channel estimation accuracy, You \textit{et al.} designed an RIS training reflecting matrix for a single-user communication system \cite{you2020channel}. Abeywickrama \textit{et al.} investigated an RIS system by considering a practical model for the phase shift\cite{abeywickrama2020intelligent}. Yang \textit{et al.} \textit{et al.} analyzed an RIS-assisted uplink multi-user system \cite{yang2019irs}, and Wang \textit{et al.} explored RIS-aided IoT networks with uplink over-the-air computation and downlink energy beamforming \cite{ZWang}. The physical-layer security of RIS-assisted systems was analyzed in \cite{yu2019enabling}, and MAC layer protocol for RIS-assisted multi-user system was designed in \cite{XCao}. Hu \textit{et al.} designed an RIS-assisted sensing system for posture recognition \cite {hu2020reconfigurable}. Huang \textit{et al.} explored the potential of deep reinforcement learning in RIS-aided systems \cite{9110869}, and Yang \textit{et al.} introduced federated learning into RIS-aided systems \cite{add2}. Deep learning technologies instead of conventional optimization methods were investigated in \cite{yang2020computation,add3}.

At the time of writing, several works have explored the application of RISs in UAV communications to improve energy efficiency by jointly optimizing the UAV’s trajectory and RIS passive beamforming \cite{long2020reflections,li2020reconfigurable,ge2020joint,add7}. However, the RIS-aided transmission protocol design and the joint optimization of RIS resource allocation and configuration in aerial-terrestrial communications are still open issues.

\subsection{Contributions and Organizations}

In this paper, the RIS is used to support multiple UAV-user pairs for their communication links improvement based on the proposed RIS-aided transmission protocol. In particular, we consider two issues in this paper: 1) how to allocate the RIS elements  (i.e., whether or not to allocate and how many to allocate); and 2) how to configure the RIS reflecting coefficients. To tackle these two issues, we formulate an RIS-assisted transmission strategy optimization problem, which is solved via multi-task learning more efficiently to maximize the overall system capacity. The main contributions of this paper are summarized as follows.

\begin{itemize}
\item We propose an RIS-assisted transmission strategy to allocate the RIS elements and configure the RIS reflecting coefficients for multiple UAV-user pairs. By executing the proposed strategy, the RIS controller decides for each UAV-user pair whether to communicate via the RIS, thereby enhancing the coverage and propagation quality of aerial-terrestrial communications.   

\item We design an adaptive RIS-assisted transmission protocol, which is a frame-based periodic structure. Channel estimation, transmission strategy, and data transmissions are alternately executed in a frame. This design has four major benefits: 1) achieving distributed aerial-terrestrial communications; 2) adapting for the dynamical wireless environment; 3) efficient utilization of the RIS, and 4) interference reduction.

\item We introduce a deep neural network-based machine learning model, called multi-task learning, to infer the optimal transmission strategy accurately in near-real-time. Compared with conventional analytical and numerical methods, the proposed approach can speed up the problem solution, reduce the computation complexity, and improve the solution accuracy. We then evaluate the proposed protocol in terms of signal-to-noise ratio (SNR), transmit power, and overall system throughput. We also discuss the accuracy and time cost for solving the problem via multi-task learning.  
\end{itemize}    			

The remainder of this paper is organized as follows. Section \ref{sec2} introduces the RIS-assisted aerial-terrestrial communication system model. Section \ref{sec3} describes the RIS-assisted transmission protocol. In Section \ref{sec4}, we formulate the transmission strategy optimization problem for RIS configuration. Section \ref{sec5}  solves the formulated mixed-integer non-linear program (MINLP) problem via multi-task learning. Simulation results and discussions are provided in Section \ref{sec6}. Section \ref{sec7} concludes the paper.
     
\textit{Notations}: Boldface letters denote column vectors or matrices. $\mathbb{C}^{x\times y}$ represents the space of $x \times y$ complex-valued matrices. $\text{diag}(x)$ returns a square diagonal matrix with the elements in $x$ on its main diagonal. $(\cdot)^T$ and $(\cdot)^H$ denote the transpose and the conjugate transpose, respectively. $\vert \cdot \vert$ denotes the absolute value of a complex number. 

\section{System Model}\label{sec2}  
We consider an RIS-assisted multi-user downlink aerial-terrestrial communication system, as illustrated in Fig. \ref{sys}. In this system, an RIS is attached to a building facade to assist the downlink communications of $K$ UAV-user pairs, denoted by the set $\mathcal{K}\!=\!\{1, 2,\! \ldots\!, K\}$, and each UAV-user pair is equipped with a single antenna. We assume that the RIS is equipped with $N$ passive and low-cost reflecting elements denoted by the set $\mathcal{N}\!=\!\{1, 2, \!\ldots\!, N\}$. The $N$ reflecting elements of the RIS are arranged in a uniform planar array (UPA) configuration \cite{di2020hybrid}. The RIS operates as a nearly passive device, and in particular, only the RIS controller and configuration circuitry consume power. A quasi-static fading channel model is considered, in which UAV-user pairs' locations are kept fixed within a frame.

\begin{figure}[t]
\small	
\centering{\includegraphics[height=2.2in,width=2.9in]{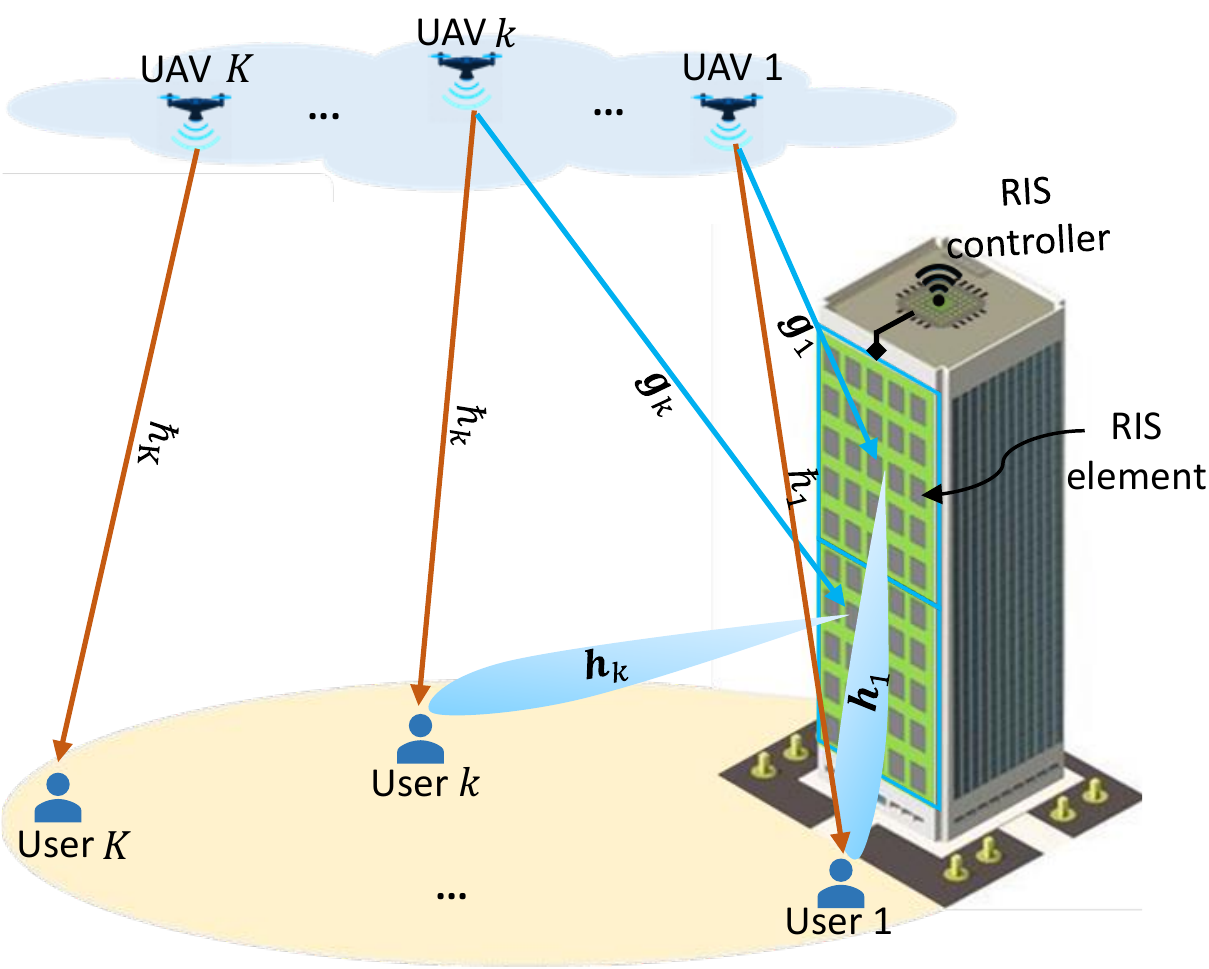}}
\caption{An RIS-assisted multi-user downlink aerial-terrestrial communication system.} 
\label{sys}
\end{figure}

Due to the limited number of RIS elements in practice, it is difficult for the RIS to serve all UAV-user pairs. In this case, the RIS controller needs to allocate the RIS elements to some UAV-user pairs. In each frame, we assume that $N$ RIS elements are equally divided into $L$ RIS groups, denoted by the set $\mathcal{L}\!=\!\{1,2,\ldots,L\}$. Let $\mathbf{L}\! = \!\{\mathbf{l}_1, \mathbf{l}_2, \ldots, \mathbf{l}_L\}$ denote the set of $L$ RIS groups in a frame, where $\mathbf{l}_l\!=\!\{l_1, l_2, \ldots, l_{N/L}\}, \forall l \in \mathcal{L}$ denotes the $l$-th RIS group comprises of $N/L$ elements. Note that the RIS group size and state vary with the frame, while they keep unchanged within a frame. Moreover, the frequency channel is divided into $C$ orthogonal sub-carriers. Since different UAV-user pairs use different sub-carriers to communicate, the signal on a specific sub-carrier can be decoded at the receiver, although the RIS will scatter all signals that reach it. This implies that one RIS group can serve one UAV-user pair on one sub-carrier without inter-group interference. For simplicity, the amplitude and phase responses are assumed to be zero outside the narrow band of a sub-carrier \cite{yang2020intelligent,zheng2019intelligent}. The total bandwidth $B$ is divided into $\omega_1B$ and $\omega_2B$, where $\omega_1+\omega_2=1$. Specifically, the bandwidth $\omega_1B$ is used for the RIS-assisted transmissions, while the bandwidth $\omega_2B$ is used for the transmissions that are not assisted by the RIS. The links from $K$ UAVs to the RIS are denoted by the set $\mathbf{G} = \{\mathbf{g}_1, \mathbf{g}_2, \ldots, \mathbf{g}_K\}$, and the links from the RIS to $K$ users are denoted by the set $\mathbf{H} = \{\mathbf{h}_1, \mathbf{h}_2, \ldots, \mathbf{h}_K\}$, where $\mathbf{g}_k\in \mathbb{C}^{(N/L)\times1}, k\in \mathcal{K}$ and $\mathbf{h}_k\in \mathbb{C}^{1\times(N/L)}, k\in \mathcal{K}$. The direct links of $K$ UAV-user pairs are denoted by $\mathbf{r} = \{\hslash_1, \hslash_2, \ldots, \hslash_K\}$. In the considered model, each RIS group and each sub-carrier is used to assist one UAV-user pair at a time. Otherwise, collisions occur at the user side.

\subsection{Channel Model}\label{sec2a}
As shown in Fig. \ref{sys}, the UAV-user pair's transmission channels are composed of the direct path (i.e., the UAV-user link) and the reflect paths (i.e., the UAV-RIS links and the RIS-user links). On this basis, the Rayleigh fading channel model is considered for the UAV-user link and the UAV-RIS links, the Rician fading channel model is considered for the RIS-user links \cite{long2020reflections,li2020reconfigurable,ge2020joint}. Therefore, the received signal at the $k$-th user is denoted by
\begin{equation}\label{received signal1}
\centering
y_k = \underbrace{\hslash_ks_k}_{\text{direct path}}+\underbrace{\mathbf{h}_k \mathbf{\Theta}_k \mathbf{g}_k s_k}_{\text{reflect paths}}+w_k, 
\end{equation}
where $s_k$ represents the signal of the $k$-th UAV, which is identically distributed (i.i.d.) random variable with zero mean and unit variance, $w_k$ is additive white Gaussian noise (AWGN) at the $k$-th user with zero mean and variance $\sigma^2$. $\mathbf{\Theta}_k$ is the RIS reflection coefficient matrix of the $k$-th UAV-user pair, which can be expressed as
\begin{equation}\label{channel1}
\centering
\mathbf{\Theta}_k =\text{diag}(\phi_k^{l_1},\ldots,\phi_k^{l_n},\ldots,\phi_k^{l_{N/L}}),
\end{equation}
where $\!\phi_k^{l_n}\!=\!\beta_k^{l_n}e^{j\theta_k^{l_n}}$ is the reflection coefficient of element $n$ on the $l$-th RIS group for the $k$-th UAV-user pair, ${l_n}\in \mathbf{l}_l$ denotes element $n$ on the $l$-th RIS group, $\{\theta_k^{l_n},\beta_k^{l_n}\}$ are the phase shift and amplitude reflection coefficient of element $n$ on the $l$-th RIS group for the $k$-th UAV-user pair. In practice, we assume that a continuous phase shift with a constant amplitude reflection coefficient is applied to each RIS element, i.e., $\vert\beta_k^{l_n}\vert=1, \theta_k^{l_n} \in [0, 2\pi), \forall {l_n}\in \mathbf{l}_l$. Let $\mathbf{\Psi}=\{\bm{\theta}_1,\bm{\theta}_2,\ldots, \bm{\theta}_K\}$ denote an $(N/L)*K$ dimensional RIS phase shift matrix, where $\bm{\theta}_k\!=\!\{\theta_k^{l_1},\theta_k^{l_2},\ldots, \theta_k^{l_{N/L}}\}$ is an $N/L$ dimensional phase shift vector that are aligned to the $k$-th UAV-user pair. 

The direct channel gain of the $k$-th UAV-user link, $\hslash_k$, is expressed as
\begin{equation}\label{LoS}
\centering
\hslash_k = \sqrt{h_0d_{k,Uu}^{-\tau_{k,Uu}}}\bar{h}, 
\end{equation}
where $h_0$ is the path loss at a reference distance $d_0 = 1$ m, $d_{k,Uu}$ is the LoS distance of the $k$-th UAV-user pair, $\tau_{k,Uu}\geq 2$ is the path loss exponent of the $k$-th UAV-user link, and $\bar{h}$ is the random scattering component, which is a zero-mean and unit-variance circularly symmetric complex Gaussian (CSCG) random variable.

The channel gain vector of the $k$-th UAV-RIS links, $\mathbf{g}_k$, is given by
\begin{equation}
\centering
{\mathbf g}_k = \sqrt{h_0d_{k,UR}^{-\tau_{k,UR}}} \mathbf{\bar g}_k, 
\end{equation}
where $d_{k,UR}$ is the distance between the $k$-th UAV and the RIS. $\tau_{k,UR}\geq 2$ is the path loss exponent of the $k$-th UAV-RIS links. $\mathbf{\bar g}_k$ is the array responses, which is denoted by
\begin{equation}
\centering
\mathbf{\bar g}_k = {\mathbf a}_{R}\left( \varphi^{AOA}_{k},\vartheta^{AOA}_{k}\right), 
\end{equation}
$\varphi^{AOA}_k$ $\left(\vartheta^{AOA}_k\right) $ is the corresponding azimuth (elevation) angle-of-arrival (AoA) of the $k$-th UAV-RIS links. ${\mathbf a}_{R}\left( \varphi,\vartheta\right)\triangleq\left[1,\ldots,e^{j\frac{2\pi}{\lambda}d\left(l_x\sin\varphi\sin\vartheta+l_y\cos\vartheta\right) },\ldots,\right]^T$, where $l_x$ and $l_y$ ($0\le\{l_x,l_y\}\le N/L-1$) are the length and width of RIS group $l$, $d$ is the antenna separation and $\lambda$ is the carrier wavelength.

In the considered system model, both LoS and NLoS components are considered in the RIS-user $k$ links, the channel gain vector of the RIS-user $k$ links is denoted by $\mathbf{h}_k$ and is given by
\begin{equation}
\centering
\mathbf{h}_k =  \sqrt{h_0d_{k,Ru}^{-\tau_{k,Ru}}}\left(\sqrt{\frac{\alpha}{1+\alpha}}\bar{\mathbf{h}}_k+\sqrt{\frac{1}{1+\alpha}}\hat{\mathbf{h}}_k\right), 
\end{equation}
where $d_{k,Ru}$ is the distance between the RIS and the $k$-th user. $\tau_{k,Ru}\geq 2$ is the path loss exponent of the RIS-user $k$ links. $\alpha$ is the Rician factor, and $\bar{\mathbf{h}}_k$ is the LoS components vector, which is given by
\begin{equation}
\centering
\mathbf{\bar h}_k = {\mathbf a}_{R}^H\left( \varphi^{AOD}_{k},\vartheta^{AoD}_{k}\right), 
\end{equation}
where $\varphi^{AOD}_k$ $\left(\vartheta^{AOD}_k\right) $ is the corresponding azimuth (elevation) angle-of-departure (AoD) of the RIS-user $k$ links. Also, $\mathbf{\hat h}_{k}$ is the NLoS components vector, which follows i.i.d. complex Gaussian distributed with zero mean and unit variance.

\subsection{Communication Model}\label{sec2b}
Let $\mathcal{U}=\{{u}_{1},{u}_{2},\ldots,{u}_{K}\}$ denote the $K$-dimensional RIS group occupation vector of $K$ UAV-user pairs, where ${u}_{k}$ represents that the $k$-th UAV-user pair communicates via the $l$-th RIS group or without the assistance of the RIS. Then, we have
\begin{equation}
\centering
 u_{k} =
\left\{\begin{array}{l}
l,  \;\;\;\;\;\text{if UAV}\ k\ \text{transmits via RIS group}\ l,  \\
0,  \;\;\;\;\;\text{if UAV}\ k\ \text{transmits without the RIS}.
\end{array} \right.
\end{equation}
Since one RIS group serves one UAV-user pair at a time, then we have 
\begin{equation}
\centering
u_k \neq u_{k'}, \exists u_k\neq 0, u_k'\neq 0, \forall k, k'\in \mathcal{K}.   
\end{equation} 
In addition, we define $\mathcal{F}=\{f(u_1),f(u_2),\ldots,f(u_K)\}$ as the $K$-dimensional RIS-assisted transmission decision vector of $K$ UAV-user pairs, where $f(u_k)$ represents that the $k$-th UAV-user pair is assigned one RIS group or not. Then, we have 
\begin{equation}
\centering
 f(u_{k}) =
\left\{\begin{array}{l}
1,\;\;\;\;\;\;\;\;\;u_k\neq 0,  \\
0,\;\;\;\;\;\;\;\;\;u_k=0.
\end{array} \right.
\end{equation}
Since one UAV-user pair either to communicate without the assistance of the RIS or to communicate with the assistance of only one RIS group, we have $\sum_{k=1}^{K} f(u_{k}) = L$.

Moreover, we define $\mathcal{P} = \{p_1, p_2, \ldots, p_K\}$ as the $K$-dimensional RIS element occupation ratio vector of $K$ UAV-user pairs, where $p_k$ is the RIS element occupation ratio of the $k$-th UAV-user pair. Then, we have
\begin{equation}\label{pk}
\centering 
p_{k} =
\left\{\begin{array}{l}
\frac{1}{\sum_{k=1}^{K} f(u_{k})},\;\;\;\;\;u_k\neq 0, \vspace{1ex} \\
0,\;\;\;\;\;\;\;\;\;\;\;\;\;\;\;\;\;\;\;\; u_k= 0.
\end{array} \right.
\end{equation}

Let $\mathcal{C}=\{{c}_{1},{c}_{2},\ldots,{c}_{K}\}$ be the $K$-dimensional sub-carrier bandwidth occupation ratio vector of $K$ UAV-user pairs, where ${c}_{k} \in (0, 1)$ is the occupation ratio of the bandwidth of the $k$-th UAV-user pair, which is calculated as
\begin{equation}\label{ck}
\centering 
c_{k} =
\left\{\begin{array}{l}
p_k\omega_1,\;\;\;\;\;\;\;\;\; \;\;\;\;\;\;\;\;\;\;u_k\neq 0, \vspace{1ex} \\
\frac{\omega_2}{{K\!-\sum_{k=1}^{K} f(u_{k})}},\;\;\;\;\;\; u_k= 0.
\end{array} \right.
\end{equation}

Given the channel model in Section \ref{sec2a}, let $\rho^2$ be the transmitter power of the $k$-th UAV. Accordingly, the received SNR at the $k$-th user is calculated as 
\begin{equation}\label{SNRE1}
\centering 
\text{SNR}_k =
\left\{\begin{array}{l}
\left|\left(\hslash_k+\mathbf{h}_k \mathbf{\Theta}_k \mathbf{g}_k\right)\rho\right|^2/\sigma^2,\;\;\;\;\;u_k\neq 0,\vspace{1ex}  \\
\left|\hslash_k\rho\right|^2/\sigma^2,\;\;\;\;\;\;\;\;\;\;\;\;\;\;\;\;\;\;\;\;\;\;\;\;\;\; u_k = 0.
\end{array} \right.
\end{equation}

\subsection{Computation Model}
In the considered computation model, the transmission strategy of the $k$-th UAV-user pair is denoted as $\mathcal{D}_k\!=\!\{u_k, \bm{\theta}_k\}$, where $u_k$ values zero or nonzero, indicating whether the $k$-th UAV-user pair communicates via the $l$-th RIS group or without the assistance of the RIS, and $\bm{\theta}_k$ is a continuous value, representing the RIS phase shift vector of the $k$-th UAV-user pair. Therefore, the transmission strategy of $K$ UAV-user pairs is denoted by 
\begin{equation}
\centering
\mathbf{D}=\{\mathcal{D}_1,\ldots, \mathcal{D}_K\}. 
\end{equation} 

\subsection{Power Consumption Model}
The power consumption of the $k$-th UAV-user pair, denoted by $P_k$, is formulated as
\begin{equation}\label{Pc}
\centering 
P_k =
\left\{\begin{array}{l}
\mu\rho^2+P_{k, U}+P_{k, u}+P_{k,R},\;\;\;\;\;\;\;\;\;\;\;\; u_k\neq 0, \vspace{1ex} \\
\mu\rho^2+P_{k, U}+P'_{k, u},\;\;\;\;\;\;\;\;\;\;\;\;\;\;\;\;\;\;\;\;\;\;\;\; u_k = 0,
\end{array} \right.
\end{equation} 
where $\mu=\nu^{-1}$ with $\nu$ being the power amplifier efficiency, $P_{k, U}$ is the static hardware power consumption at the $k$-th UAV, $P_{k, u}$ and $P'_{k, u}$ denote the static hardware power consumption at the $k$-th user via the RIS or not, respectively. Also, $P_{k,R}$ is the power consumption at the RIS group for serving the $k$-th UAV-user pair, which is given by 
\begin{equation}\label{Pc1}
\centering 
P_{k,R} = p_kP_R,
\end{equation} 
where $P_{R}$ is the power consumption at the RIS. Note that $P_{k,R} = 0$ as $u_k=0$.

Therefore, the overall power consumption of $K$ UAV-user pairs can be expressed as 
\begin{equation}
P_o\!=\!K\mu\rho^2\!+\!\sum_{k=1}^{K}\!\left(P_{k,U}\!+\!f(u_k)\!P_{k, u}\!+\!\left(1\!-\!f(u_k)\right)\!P'_{k, u}\!+\!p_kP_R\right).
\end{equation}

\begin{figure*}[t]
\small
 \centering
          	\subfigure[The RIS-assisted transmission protocol design]{
    		\begin{minipage}[b]{1\textwidth}
    			\centering
    			 \includegraphics[height=2.2in,width=6.2in]{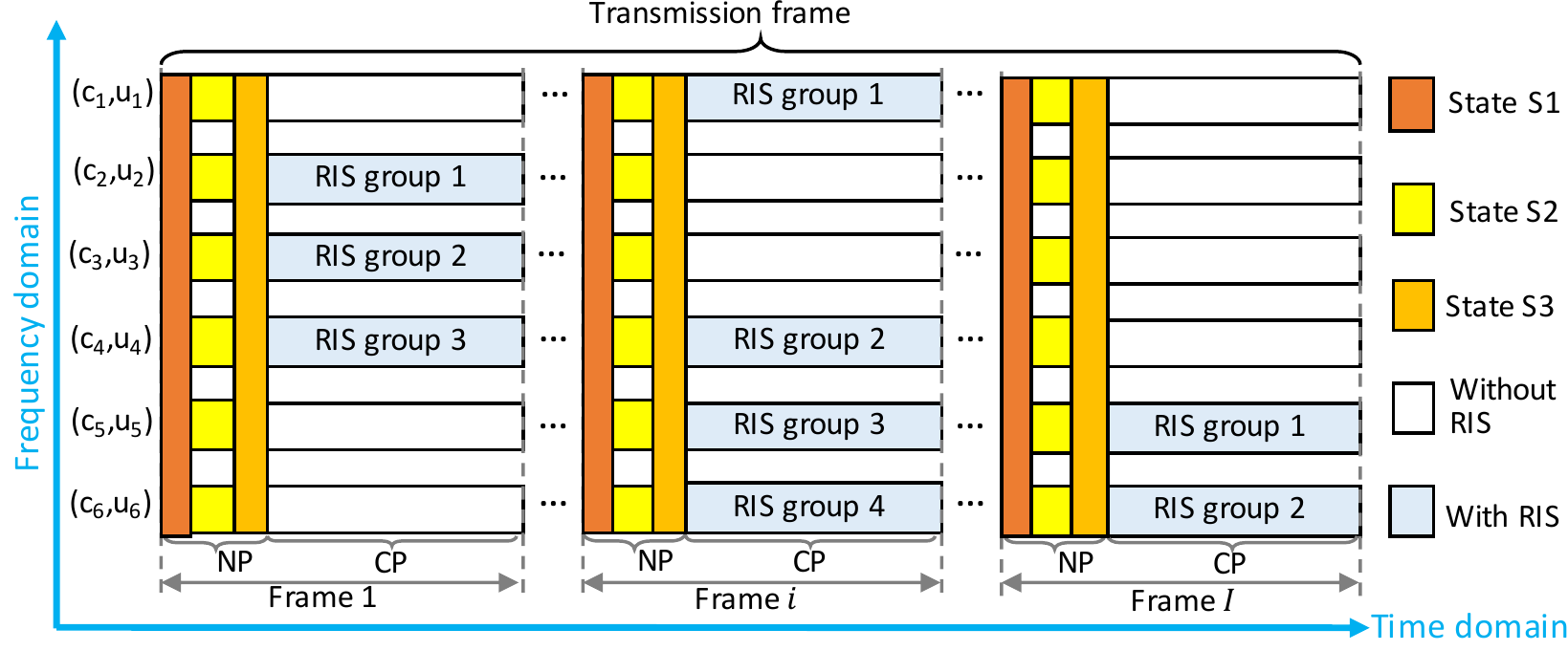}
    		\end{minipage}
    	}
    	\hfill
    	\subfigure[Three states in each NP for synchronization, channel estimation, and optimization]{
    		\begin{minipage}[b]{1\textwidth}
    			\centering
    		 \includegraphics[height=1.5in,width=6.2in]{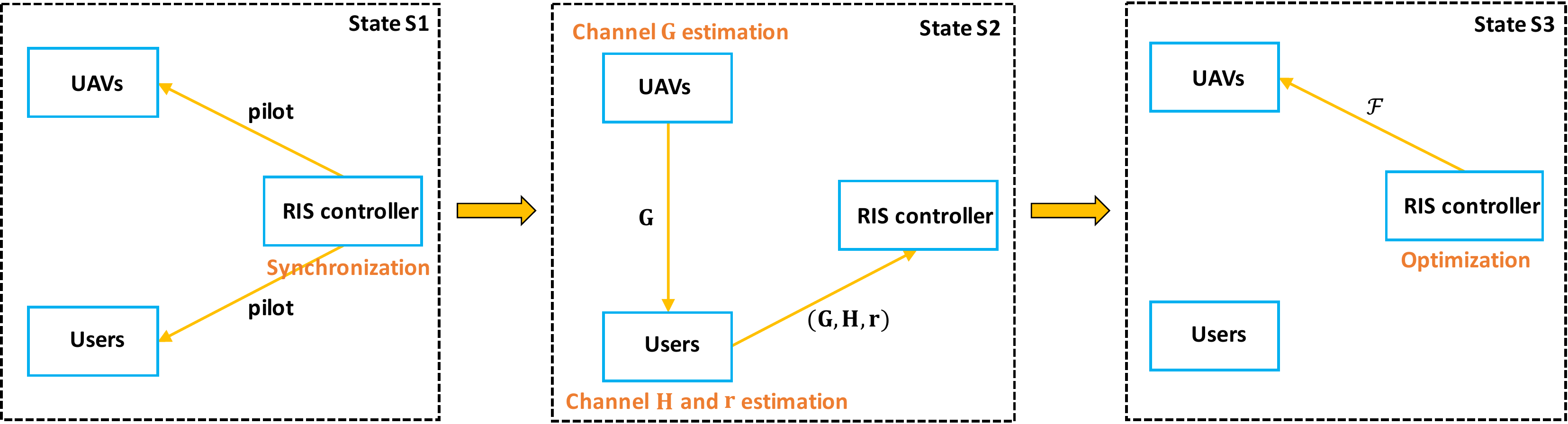}
    		\end{minipage}
    	}
    	\caption{RIS-assisted transmission protocol with three states and data communications, where $K=6$.} 
    	\label{P}
    \end{figure*}

\section{RIS-Assisted Transmission Protocol}\label{sec3}
The proposed RIS-assisted aerial-terrestrial transmission protocol is shown in Fig. \ref{P} (a). In the frequency domain, each UAV-user pair occupies a sub-carrier. In the time domain, the transmission frame can be divided into $I$ frames. By denoting $\mathcal{I}=\{1,2,\ldots,I\}$, frame $i$ ($i\in\mathcal{I}$) consists of two phases: the negotiation phase (NP) and the communication phase (CP). Furthermore, each NP consists of three states: synchronization (state S1), channel estimation (state S2), and optimization  (state S3), as illustrated in Fig. \ref{P} (b). At state S1, the RIS controller sends the pilot across the whole frequency band for synchronization and following channel estimation. At state S2, $K$ UAVs estimate the channel information $\mathbf{G}$, and $K$ users estimate the channel information $\mathbf{H}$ and $\mathbf{r}$, then $K$ users feedback the channel information $\mathbf{G}$, $\mathbf{H}$, and $\mathbf{r}$ to the RIS controller. At state S3, the RIS controller optimizes the transmission strategy and initiates the transmissions of $K$ UAV-user pairs. These three states are independent and their duration are given by the set $\mathbf t=\{t_{s1},t_{s2},t_{s3}\}$. As the CP arrives, the UAV-user pairs that are assigned with RIS elements start their transmissions via the RIS, while the other UAV-user pairs that are not assigned with the RIS elements communicate only via the direct link. From a practical implementation perspective, the challenges faced in the proposed protocol focus on the RIS channel estimation and the RIS beamforming design \cite{add5,add6}.

\subsection{Channel Estimation}\label{sec3a}
In the proposed protocol, the channel information (i.e., $\mathbf{G}$, $\mathbf{H}$, and $\mathbf{r}$) can be estimated at the UAVs and users in a frame-by-frame basis. Here, we assume that each UAV-user pair can obtain perfect channel estimation \cite{9133107}. By considering the path loss model for the urban macro scenario, as presented in 3GPP TR 38.901, the path loss $h_0$, at the reference distance of the LoS link, is given by
\begin{align}
\centering
h_0 &= 28+22{\rm log_{10}}(d_0)+20{\rm log_{10}}(f),
\end{align} 
where $d_0$ is a reference distance, and $f$ is the sub-carrier frequency. Based on $h_0$ and the proposed channel model in Section \ref{sec2a}, the path loss of $\mathbf{G}$, $\mathbf{H}$, and $\mathbf{r}$ can be estimated, respectively.

\subsection{RIS Elements Allocation and Phase Shifts Configuration}\label{sec3b}
At state S3, the RIS controller optimizes the transmission strategy to maximize the overall throughput of the aerial-terrestrial communication system. Therefore, the maximized overall throughput performance can be written as
\begin{align}\label{Thr}
\centering
{\cal{S}}_{overall} &= \left(1-\frac{T_N}{T_F}\right)R_{overall}^*,
\end{align} 
where $T_N=\sum_{i=1}^{3}t_{si}$ and $T_F=T_N+T_C$ represent the time length of one NP and one frame, respectively, $T_C$ denotes the time length of one CP, and $R_{overall}^*$ is the maximized system capacity which can be obtained by optimizing the RIS group occupation vector ($\mathcal{U}$) and the RIS phase shift matrix ($\mathbf{\Psi}$).

\begin{figure*}[t]	
\centering{\includegraphics[height=2.6in,width=6in]{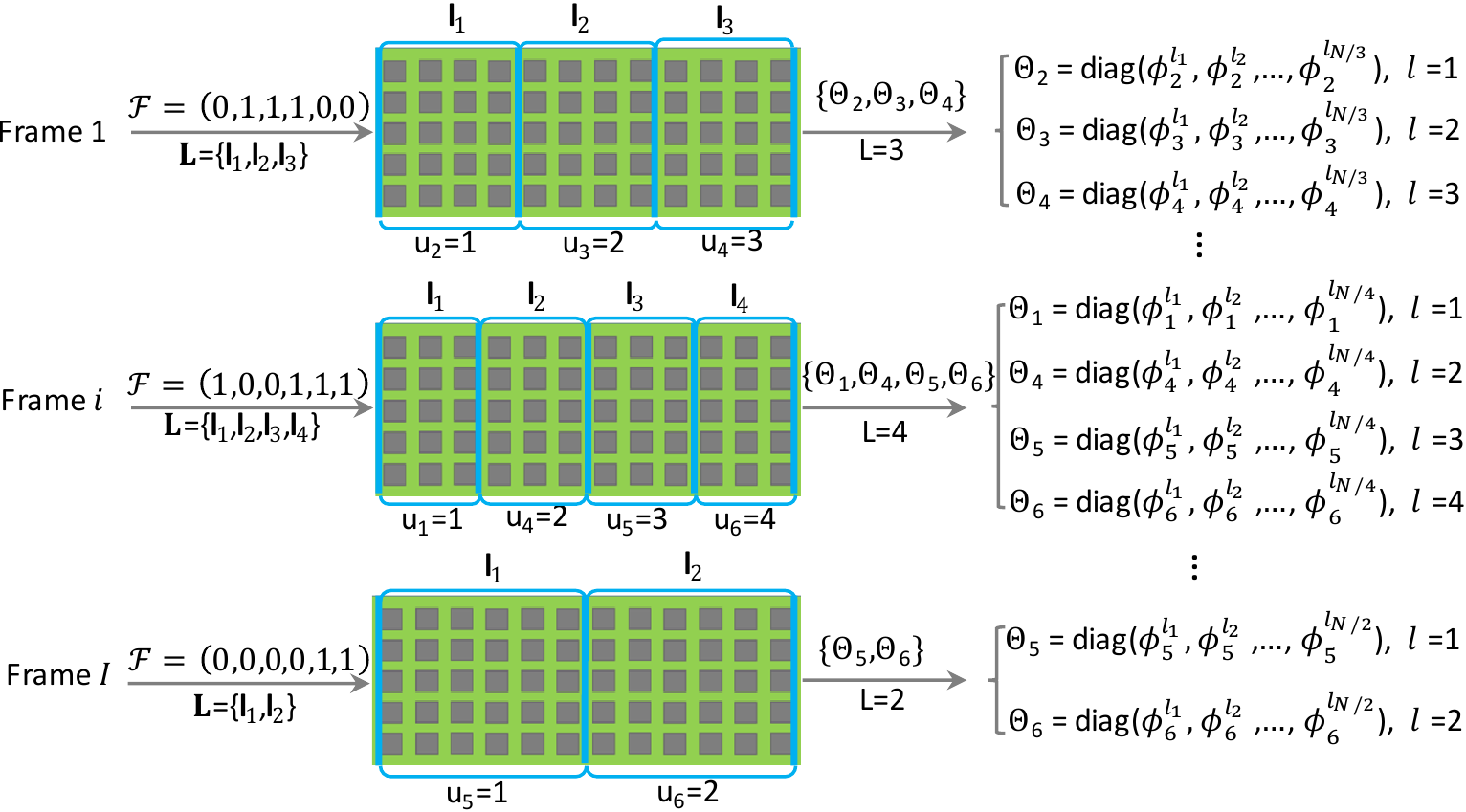}}
	\caption{An illustration of RIS elements allocation and RIS phase shifts configuration, where $K=6$}
	\label{RA}
\end{figure*}

An example related to the RIS elements allocation and RIS phase shifts configuration is shown in Fig. \ref{RA}, when $\mathcal{F}=\{f(u_1),f(u_2),\ldots,f(u_K)\}$ is optimized in one frame, all non-zero values in $\mathcal{F}$ are constructed as a new set that indicates the RIS group partition, denoted by $ \mathbf{L}$. That is to say, $L$ RIS groups are allocated to the $L$ UAV-user pairs with $f(u_k)=1$, in ascending order. For example, the RIS elements allocation in different frames with $K=6$ can be expressed as 
\begin{align}
\centering
\left\{\begin{array}{l}
\!\text{Frame\ } \!1\!: \mathcal{F}\!=\!\{\overbrace{0,1,1,1,0,0}^{L=3}\}\!\Rightarrow\mathbf{L}\! =\! \{\overbrace{\mathbf{l}_1, \mathbf{l}_2, \mathbf{l}_3}^{u_2, u_3, u_4}\}, \\
 \ \ \ldots \ \ \ \ \ \ \ \ \ \ \ \\
\!\text{Frame\ } \!i\!: \mathcal{F}\!=\!\{\overbrace{1,0,0,1,1,1}^{L=4}\}\!\Rightarrow\mathbf{L}\! = \!\{\overbrace{\mathbf{l}_1, \mathbf{l}_2, \mathbf{l}_3,\mathbf{l}_4}^{u_1,u_4,u_5,u_6}\}, \\
 \ \ \ldots \ \ \ \ \ \ \ \ \ \ \ \\
\!\text{Frame\ } \!I\!: \mathcal{F}\!=\!\{\overbrace{0,0,0,0,1,1}^{L=2}\}\!\Rightarrow\mathbf{L}\! = \!\{\overbrace{\mathbf{l}_1, \mathbf{l}_2}^{u_5,u_6}\}. \\
\end{array} \right.
\end{align} 
Specifically, in frame 1, $\mathcal{F}=\{0,1,1,1,0,0\}$ indicates that the RIS is divided into three equal RIS groups, $\mathbf{l}_1, \mathbf{l}_2, \mathbf{l}_3$, which are assigned to the UAV-user pair $2$, $3$, and $4$, respectively. In frame $i$, $\mathcal{F}=\{1,0,0,1,1,1\}$ indicates that the RIS is divided into four equal RIS groups, $\mathbf{l}_1, \mathbf{l}_2, \mathbf{l}_3, \mathbf{l}_4$, which are assigned to the UAV-user pair $1$, $4$, $5$, and $6$, respectively. In frame $I$, $\mathcal{F}=\{0,0,0,0,1,1\}$ indicates that the RIS is divided into two equal RIS groups, $\mathbf{l}_1, \mathbf{l}_2$, which are assigned to the UAV-user pair $5$ and $6$, respectively.

\subsection{Data Communications}\label{sec3c}
Once the NP finishes, the RIS is controlled in the CP to assist the data communications of the UAV-user pairs. As is shown in Fig. \ref{P} (a), if $u_{k}=l$, it means that the $k$-th UAV-user pair communicates via the allocated $l$-th RIS group, then the phase shifts of the $l$-th RIS group are configured accordingly. Otherwise, if $u_{k}=0$, it means that the $k$-th UAV-user pair communicates only via the direct link. Note that the number of RIS groups can be calculated by $L=\sum_{k=1}^{K} f(u_{k})$, and the number of elements on each RIS group equals to $N/L$.

\section{Problem Formulation, Analysis, and Solution}\label{sec4}

According to \eqref{Thr}, the overall system throughput is affected by $R_{overall}^*$ and $T_N$. To maximize the overall throughput, we formulate a joint optimization problem that combines the RIS elements allocation and the RIS phase shifts configuration. Then, we analyze and discuss its solution.
 
\subsection{Problem Formulation}\label{sec4a}
Given the transmission strategy $\mathcal{D}_k=\{u_k, \bm{\theta}_k\}$ of the $k$-th UAV-user pair, the total capacity of the UAV-user pairs with assistance of the RIS can be given by  
\begin{align}\label{RIS}
\centering
R_{RIS} &= \sum_{k=1}^{K}f(u_{k})c_kB\log_{2}\left(1+\text{SNR}_k\right).
\end{align} 
The total capacity of the UAV-user pairs that are without the assistance of the RIS can be written by 
\begin{align}
\centering
R_{DL} &= \sum_{k=1}^{K}\left(1-f\left(u_k\right)\right)c_kB\log_{2}\left(1+\text{SNR}_k\right). 
\end{align}

According to \eqref{SNRE1}, we have 
\begin{align}\label{RIS1}
\centering
R_{RIS} &= \sum_{k=1}^{K}f(u_{k})p_k\omega_1B\log_{2}\left(1\!+\!\frac{\vert\!\left(\hslash_k\!+\!\mathbf{h}_k \mathbf{\Theta}_k \mathbf{g}_k\right)\rho\vert^2}{\sigma^2}\right),
\end{align}
and 
\begin{align}\label{DL}
\centering
R_{DL} &=\! \sum_{k=1}^{K}\left(1\!-\!f\left(u_k\right)\right)\!\frac{\omega_2B}{K\!-\sum_{k=1}^{K} f(u_{k})}\log_{2}\left(1\!+\!\frac{\vert \hslash_k\rho\vert^2}{\sigma^2}\right). 
\end{align}
Combined \eqref{RIS1} and \eqref{DL}, we can calculate the overall system capacity, denoted by $R_{oveall}$, which is shown in \eqref{Ga}. 
\begin{figure*}
\begin{align}\label{Ga}
\centering
\!R_{oveall} &\! =\! B\sum_{k=1}^{K}  \!\left(\!f\!\left(u_{k}\!\right)\!p_k\omega_1\!\log_{2}\!\left(1\!+\!\frac{\vert\left(\hslash_k\!+\!\mathbf{h}_k \mathbf{\Theta}_k \mathbf{g}_k\right)\rho\vert^2}{\sigma^2}\right)\!+\!\left(1\!-\!f\left(\!u_{k}\right)\right)\!\frac{\!\omega_2}{\!K-\!\sum_{k=1}^{K} \!f(u_{k})}\!\log_{2}\!\left(1\!+\!\frac{\vert \hslash_k\rho\vert^2}{\sigma^2}\right)\right). 
\end{align} 
	\noindent\rule{\textwidth}{.5pt}
\end{figure*}

To obtain the optimal overall system capacity, $R_{overall}^*$, as shown in \eqref{Thr}, we formulate a joint optimization problem as
\begin{align}\label{p1}
\textbf{P1}\;\;\; & \text{(Original problem)}:\ \ \ \ \ \ \ \ \ \ \ \ \ \ \ \ \ \ \ \ \ \ \ \ \ \ \ \ \ \ \ \ \ \ \\
\notag
&\mathop {\max}\limits_{\mathcal{U}, \mathbf{\Psi}}\ \ R_{oveall} \\ 
\notag
\bf{s.t.}\ \
&\text{C1:}\ \ u_k\in\{0,l\}, \ \ \ \ \ \ \ \ \ \ \ \ \ \ \ \forall k \in \mathcal{K}, l\in \mathcal{L}\\
\notag
&\text{C2:}\ \  u_k \neq u_{k'}, \ \ \ \ \ \ \ \ \ \ \ \ \ \ \ \ \ \exists u_k\neq 0, u_k'\neq 0, \forall k, k'\in \mathcal{K},\\
\notag
&\text{C3:}\ \ f(u_{k})\in\{0,1\}, \ \ \ \ \ \ \ \ \ \ \ \forall k \in \mathcal{K},\\
\notag
&\text{C4:}\ \ \sum_{k=1}^{K} f(u_{k}) = L, \ \ \ \ \ \ \ \ \ \ \ \forall k \in \mathcal{K},\\
\notag
&\text{C5:}\ \ L\le L_{max}, \ \ \ \ \ \ \ \ \ \ \ \ \\
\notag
&\text{C6:}\ \ P_o \le P_{max},\ \ \ \ \ \ \ \ \ \ \ \ \ \ \ \ \forall k \in \mathcal{K},\\
\notag
&\text{C7:}\ \ p_k=\{0, \frac{1}{\sum_{k=1}^{K} f(u_{k})}\}, \ \ \forall k \in \mathcal{K}, \\
\notag
&\text{C8:}\ \ \vert\ \phi_k^{l_n} \vert=1, \ \ \ \ \ \ \ \ \ \ \ \ \ \ \ \ \ \ \forall k \in \mathcal{K}, \ \forall l \in \mathcal{L}, \\
\notag
&\text{C9:}\ \ \theta_k^{l_n}\in [0,2\pi),\ \ \ \ \ \ \ \ \ \ \ \ \ \ \ \forall k \in \mathcal{K}, \ \forall l \in \mathcal{L}.
\end{align}
The constraints in problem \textbf{P1} are detailed as follows: C1 is the RIS group allocation of each UAV-user pair. C2 limits one RIS group to serve one UAV-user pair. C3 denotes whether or not to allocate RIS elements. C4 and C5 constrain the number of RIS groups, where $L_{max}$ is the maximum number of RIS groups. C6 constrains the total power consumption, where $P_{max}$ is the maximization system power consumption. C7 constraints the RIS element occupation ratio of each UAV-user pair. C8 and C9 constrain the reflection coefficients of the RIS.

\subsection{Problem Analysis}\label{sec4b}
For problem \textbf{P1} in \eqref{p1}, we analyze the maximized overall system capacity in terms of three cases, which are presented as follows.    
\subsubsection{$K$ UAV-user pairs communicate via the RIS} When the number of UAV-user pairs is small, each UAV-user pair can communicate with the assistance of the RIS, which means that the RIS elements are enough to support and improve the communications of all UAV-user pairs. In this case, we give \textbf{Observation 1} to show the maximized overall system capacity.  
\begin{observation}
For $K$ UAV-user pairs, the maximized overall system capacity with the assistance of the RIS is calculated as
\begin{align}\label{Ca}
\centering
\!R^*_{RIS}\!&= \!\mathop{\max}\limits_{\mathcal{U},\mathbf{\Psi}}\ \sum_{k=1}^{K}\!f(u_k)p_k\!\omega_1B\!\log_{2}\!\left(1\!+\!\frac{\left|\left(\!\hslash_k\!+\!\mathbf{h}_k \mathbf{\Theta}_k \mathbf{g}_k\right)\rho\right|^2}{\sigma^2}\right)\\
\notag
&= \!\frac{B}{K}\sum_{k=1}^{K}\!\log_{2}\left(1\!+\!\frac{\rho^2\left(\vert \hslash_k \vert\!+\!\sum_{n=1}^{\frac{N}{K}}\!\vert h_k^{l_n} \vert \vert g_k^{l_n}\vert\right)^2}{\sigma^2}\right). 
\end{align} 
\end{observation}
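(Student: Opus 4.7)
The plan is to show that the case when all $K$ pairs are RIS-assisted reduces to a pair-wise decoupled optimization once the element allocation is pinned down, and that each per-pair subproblem is solved in closed form by coherent combining via the triangle inequality. I would proceed in three short steps.

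First, I would unpack the hypothesis that every UAV-user pair uses the RIS: constraint C3 forces $f(u_k)=1$ for all $k\in\mathcal{K}$, so constraint C4 gives $L=K$, the definition of $p_k$ yields $p_k=1/K$, and each RIS group therefore contains $N/K$ reflecting elements. Since no pair is served without the RIS in this scenario, I would set $\omega_1=1$ (equivalently $\omega_2=0$), which absorbs the bandwidth-split constant and produces the prefactor $B/K$ in front of each logarithm in the target expression.

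Second, I would observe that the resulting objective is separable in $k$: the phase vector $\bm{\theta}_k$ affects only the $k$-th summand, and the channels $\mathbf{g}_k,\mathbf{h}_k$ act on the $N/K$ elements of a group that is disjoint from the groups of the other pairs. Thus the joint problem splits into $K$ independent subproblems, each asking to maximize $\bigl|\hslash_k+\mathbf{h}_k\mathbf{\Theta}_k\mathbf{g}_k\bigr|$ over $\bm{\theta}_k$. Using C8 to write $\phi_k^{l_n}=e^{j\theta_k^{l_n}}$, I would expand
$$\mathbf{h}_k\mathbf{\Theta}_k\mathbf{g}_k=\sum_{n=1}^{N/K} h_k^{l_n} g_k^{l_n}\, e^{j\theta_k^{l_n}},$$
and apply the triangle inequality on complex scalars to obtain
$$\left|\hslash_k+\sum_{n=1}^{N/K} h_k^{l_n} g_k^{l_n} e^{j\theta_k^{l_n}}\right|\le |\hslash_k|+\sum_{n=1}^{N/K} |h_k^{l_n}|\,|g_k^{l_n}|.$$
Equality is attained by the co-phasing choice $\theta_k^{l_n}=\arg(\hslash_k)-\arg(h_k^{l_n} g_k^{l_n})\pmod{2\pi}$, which lies in $[0,2\pi)$ and thus satisfies C9. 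Squaring and substituting into the SNR expression yields the claimed closed form inside the logarithm.

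The main (and essentially only) obstacle is verifying that the per-pair maximizers can be taken simultaneously without conflict. This is automatic once the decoupling in step two is recognized: the phase vectors $\bm{\theta}_k$ for different $k$ occupy disjoint columns of $\mathbf{\Psi}$ acting on disjoint element groups, so the joint maximizer is simply the concatenation of the per-pair coherent-combining choices, and the upper bound is attained termwise in the outer sum over $k$, giving exactly \eqref{Ca}.
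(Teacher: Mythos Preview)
Your proposal is correct and follows essentially the same approach as the paper: both arguments pin down $f(u_k)=1$, $L=K$, $\omega_1=1$, $p_k=1/K$ to obtain the $B/K$ prefactor, then expand $\mathbf{h}_k\mathbf{\Theta}_k\mathbf{g}_k$ as a sum over the $N/K$ elements and solve the per-pair maximization by phase alignment $\theta_k^{l_n^*}=\arg(\hslash_k)-\arg(h_k^{l_n})-\arg(g_k^{l_n})$. Your write-up is slightly more explicit in invoking the triangle inequality and in spelling out the decoupling across disjoint RIS groups, but there is no substantive difference in method.
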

\begin{proof}
Since $K$ UAV-user pairs communicate simultaneously via the RIS, we have $f(u_k)=1, \forall k \in \mathcal{K}$, $L=\sum_{k=1}^{K}f(u_k)=K$, $c_k=\frac{\omega_1B}{K}$, $\omega_1=1$, and $\omega_2=0$, respectively. According to \eqref{RIS1}, for any given $\mathbf{\Theta}_k$, the overall maximization capacity can be expressed as in \eqref{Ca}, which can be calculated by the given channel information of $\mathbf{h}_k$ and $\mathbf{g}_k$. Note that $\mathbf{h}_k\mathbf{\Theta}_k\mathbf{g}_k=\sum_{n=1}^{\frac{N}{K}}e^{j\theta_k^{{l_n}}}\vert h_k^{l_n} \vert \vert g_k^{l_n}\vert$, where $h_k^{l_n}\in \mathbf {h_k}$, $g_k^{l_n}\in \mathbf {g_k}$, $\forall l_n\in \mathbf{l}_l, \forall n \in (1,N/K)$. The maximized overall system capacity can be obtained when the optimal phase shifts are set as $\theta_k^{{l_n}^*} = \arg(\hslash_k)-\arg(h_{k}^{l_n})-\arg(g_{k}^{l_n})$ to align the direct path of the $k$-th UAV-user pair, i.e., enable each term in the sum with the same phase shift as $\hslash_k$.    
\end{proof} 

\subsubsection{$K$ UAV-user pairs communicate without the RIS } When all UAV-user pairs communicate without the RIS, each UAV-user pair communicates relying on the direct path. With this case, we give \textbf{Observation 2} to show the maximized overall system capacity.
\begin{observation}
For $K$ UAV-user pairs, the maximized overall system capacity without the assistance of the RIS is given by
\begin{align}\label{Cda}
\centering
\!R^*_{DL} \!&= \!\mathop {\max}\limits_{\mathcal{U}}\ \! \sum_{k=1}^{K}\!\left(1\!-\!f\left(u_k\right)\!\right)\!\frac{\!\omega_2B}{K\!-\!\sum_{k=1}^{K} \!f(u_{k})}\!\log_{2}\!\left(1\!+\!\frac{\vert \hslash_k\rho\vert^2}{\sigma^2}\right)\\
\notag
&= \frac{B}{K}\sum_{k=1}^{K} \log_{2}\left(1+\frac{\left|\hslash_k\rho\right|^2}{\sigma^2}\right). 
\end{align} 
\begin{proof}
Since $K$ UAV-user pairs communicate without the assistance of the RIS, we have $f(u_k)=0, \forall k \in \mathcal{K}$, $L=\sum_{k=1}^{K}f(u_k)=0$, $\omega_1=0$, and $\omega_2=1$, respectively. According to \eqref{DL}, the maximized overall system capacity can be expressed as in \eqref{Cda}, which is determined by the channel information $\hslash_k$.     
\end{proof} 
\end{observation}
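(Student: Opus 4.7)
The plan is to argue by direct substitution: once the scenario ``no UAV-user pair uses the RIS'' is imposed, the max over $\mathcal{U}$ is over a single feasible point, so the optimization collapses to an evaluation of $R_{DL}$ under that configuration.

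First I would note that requiring all $K$ pairs to transmit without the RIS pins down every decision variable in $\mathcal{U}$: we must have $u_k=0$ and hence $f(u_k)=0$ for all $k\in\mathcal{K}$, which in turn forces $L=\sum_{k=1}^{K}f(u_k)=0$. By the bandwidth-split convention of Section \ref{sec2}, this corresponds to $\omega_1=0$ and $\omega_2=1$, since no portion of $B$ is needed for RIS-assisted transmissions. In particular, the $R_{RIS}$ term in $R_{oveall}$ vanishes identically, so $R_{overall}=R_{DL}$.

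Next I would substitute into the $c_k$ formula \eqref{ck} for the $u_k=0$ branch: since $K-\sum_{k=1}^{K}f(u_k)=K$ and $\omega_2=1$, we get $c_k=\tfrac{1}{K}$ for every $k$. The SNR expression \eqref{SNRE1} for the $u_k=0$ branch gives $\mathrm{SNR}_k=|\hslash_k\rho|^2/\sigma^2$, which involves only the direct-path coefficient and so contains no free variable to optimize. Plugging these into \eqref{DL} immediately yields
\begin{equation*}
R_{DL}=\sum_{k=1}^{K}\frac{B}{K}\log_{2}\!\left(1+\frac{|\hslash_k\rho|^2}{\sigma^2}\right),
\end{equation*}
and the claimed form of $R^*_{DL}$ follows.

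There is no genuine obstacle here because the RIS phase shift matrix $\mathbf{\Psi}$ does not appear in $R_{DL}$ and $\mathcal{U}$ is entirely determined by the hypothesis, so the ``$\max_{\mathcal{U}}$'' is vacuous. The only point worth verifying is consistency with the constraints of \textbf{P1}: C1--C4 and C7 are all satisfied by $u_k=0$, $p_k=0$, while C5, C8, C9 are inactive in the absence of any allocated RIS group, and C6 reduces to the direct-link power budget. Thus the configuration is feasible and realizes $R_{DL}$ as written, completing the proof.
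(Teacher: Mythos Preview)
Your proposal is correct and follows essentially the same approach as the paper's own proof: impose $f(u_k)=0$ for all $k$, set $L=0$, $\omega_1=0$, $\omega_2=1$, and substitute into \eqref{DL}. Your version is simply more explicit, spelling out the $c_k=1/K$ computation and the feasibility check for the constraints of \textbf{P1}, which the paper leaves implicit.
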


\subsubsection{$L$ UAV-user pairs communicate via the RIS while $K-L$ UAV-user pairs communicate without the assistance of the RIS} When the number of UAV-user pairs is larger than the maximum number of RIS groups (i.e., $L_{max} < K$), the limited number of RIS elements cannot meet all the requirements of $K$ UAV-user pairs. Therefore, we suppose that $L$ UAV-user pairs are allocated RIS elements to support their data communications, and $K-L$ UAV-user pairs communicate relying on the direct path without the assistance of the RIS. Here, the number of RIS group, $L$, is given by $L=\sum_{k=1}^{K}f(u_k)$. In this case, considering the RIS elements allocation and RIS phase shifts configuration, we give \textbf{Observation 3} to show the maximized overall system capacity. 

\begin{observation}
For $K$ UAV-user pairs, the maximized overall system capacity with limited number of the RIS elements is written as
\begin{align}\label{Ga1}
\centering
R^*_{overall} \!&= \!\mathop {\max} \limits_{\mathcal{U}, \mathbf{\Psi}}\ R_{overall}\\
\notag
\!&=\!\mathop {\max} \limits_{\mathcal{U}, \mathbf{\Psi}} \left(\!\sum_{k=1}^{K}\frac{\!\omega_1B}{L}f(u_k)\!\log_{2}\!\left(1\!+\!\frac{\vert\left(\hslash_k\!+\!\mathbf{h}_k \mathbf{\Theta}_k \mathbf{g}_k\!\right)\!\rho\vert^2}{\sigma^2}\!\right)+\sum_{k=1}^{K}\frac{\omega_2B}{K-L}(1-f(u_k))\log_{2}\left(1+\frac{\vert \hslash_k\rho\vert^2}{\sigma^2}\right)\right). \notag 
\end{align}  
\begin{proof}
Since $L$ UAV-user pairs communicate via the RIS while $K-L$ UAV-user pairs communicate without the assistance of the RIS, we have $f(u_k)\in\{0,1\}$, $L=\sum_{k=1}^{K}f(u_k)\in(0,K)$, and $p_k=1/L$, respectively. According to \eqref{Ga}, the maximized overall system capacity can be expressed as in \eqref{Ga1}. The solution of \eqref{Ga1} can be achieved by optimizing the transmission strategy, including RIS elements allocation and RIS phase shifts configuration of each UAV-user pair, as shown in the following Section \ref{sec4c}.  
\end{proof}
\end{observation}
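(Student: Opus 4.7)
The plan is to decompose the joint optimization into an outer combinatorial problem over the RIS allocation $\mathcal{U}$ and an inner continuous problem over the phase-shift matrix $\mathbf{\Psi}$. First I would show that, for any feasible allocation satisfying C1--C7, substituting the two-case definitions of $p_k$ from \eqref{pk} and of $c_k$ from \eqref{ck} into \eqref{Ga} reproduces the two-term sum in \eqref{Ga1}. Specifically, for pairs with $f(u_k)=1$ we have $p_k=1/L$ giving bandwidth $\omega_1 B/L$, while for pairs with $f(u_k)=0$ the allotted bandwidth is $\omega_2 B/(K-L)$; the rest is book-keeping. This establishes the equivalence of the maximand in \eqref{Ga1} with the throughput expression \eqref{Ga} whenever $L\in(0,K)$, which is the regime of the observation.

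Next I would optimize over $\mathbf{\Psi}$ for a fixed allocation. Only the RIS-assisted pairs contribute a $\mathbf{\Psi}$-dependent term, and each such term depends on $\mathbf{h}_k\mathbf{\Theta}_k\mathbf{g}_k=\sum_{n=1}^{N/L}e^{j\theta_k^{l_n}}|h_k^{l_n}||g_k^{l_n}|$. Applying the triangle inequality as in the proof of Observation 1, the inner maximum is attained by the co-phase choice $\theta_k^{l_n*}=\arg(\hslash_k)-\arg(h_k^{l_n})-\arg(g_k^{l_n})$, which aligns every reflected component with the direct path and turns the SNR into $\rho^2(|\hslash_k|+\sum_n|h_k^{l_n}||g_k^{l_n}|)^2/\sigma^2$. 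Because the phase shifts of different pairs enter disjoint diagonal blocks of $\mathbf{\Psi}$, this per-pair alignment is simultaneously optimal across $\mathcal{K}_R=\{k:f(u_k)=1\}$.

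With the inner phase optimization resolved in closed form, the remaining outer problem is a pure combinatorial search over the subset $\mathcal{K}_R\subseteq\mathcal{K}$ with $|\mathcal{K}_R|=L\le L_{\max}$, coupled with feasibility under the power budget C6 via \eqref{Pc}--\eqref{Pc1}. Taking the maximum over all feasible $\mathcal{K}_R$ of the closed-form evaluation then yields $R^*_{overall}$, completing the justification of \eqref{Ga1}.

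The main obstacle is this outer combinatorial step: the objective is not separable across pairs because varying $L$ simultaneously changes the per-group element count $N/L$ (and hence the coherent combining gain enjoyed by every pair in $\mathcal{K}_R$) and the bandwidth weights $\omega_1/L$ and $\omega_2/(K-L)$. This coupling precludes a simple greedy or convex-relaxation argument and makes brute-force enumeration scale as $\sum_{L=1}^{L_{\max}}\binom{K}{L}$, which is why the paper does not attempt a tractable closed-form for $R^*_{overall}$ here but instead defers to the multi-task learning surrogate developed in Section \ref{sec5} to infer the optimal allocation in near-real-time.
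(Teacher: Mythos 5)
Your proposal is correct and follows essentially the same route as the paper: the proof of this observation is really just the substitution of $p_k=1/L$ and $c_k$ from \eqref{pk}--\eqref{ck} into \eqref{Ga} for the mixed regime $L\in(0,K)$, which your first paragraph carries out. The additional material you supply --- the closed-form co-phasing $\theta_k^{l_n*}=\arg(\hslash_k)-\arg(h_k^{l_n})-\arg(g_k^{l_n})$ for the inner problem and the exhaustive outer search over allocations --- is exactly what the paper defers to Remark~\ref{r1} and Section~\ref{sec4c}, so you have simply folded that deferred content into the proof rather than taken a different path.
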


\subsection{Problem Solution}\label{sec4c}
Based on the aforementioned analysis, the original problem \textbf{P1} in \eqref{p1} can be rewritten as  
\begin{align}\label{p2}
\textbf{P2}:\;\;&\mathop {\max} \limits_{\mathcal{U}, \mathbf{\Psi}}\left(\!\sum_{k=1}^{K}\frac{\!\omega_1B}{L}f(u_k)\!\log_{2}\!\left(1\!+\!\frac{\vert\left(\hslash_k\!+\!\mathbf{h}_k \mathbf{\Theta}_k \mathbf{g}_k\!\right)\!\rho\vert^2}{\sigma^2}\!\right)+\sum_{k=1}^{K}\frac{\omega_2B}{K-L}(1-f(u_k))\log_{2}\left(1+\frac{\vert \hslash_k\rho\vert^2}{\sigma^2}\right)\right)
\notag 
\\
\bf{s.t.}\ \
&\text{C1-C6,}\;\;\text{C8-C9.}
\end{align}
The joint optimization problem \textbf{P2} is an MINLP problem. To solve this MINLP problem, \textbf{P2} can be further decomposed into two sub-problems with separated objective function and constraints. On this basis, the transmission strategy problem \textbf{P2} can be solved by alternately optimizing the RIS elements allocation and the RIS phase shifts configuration.

\subsubsection{RIS elements allocation optimization}\label{sec4c1} Given the RIS phase shifts configuration, $\mathbf{\Psi}$, problem \textbf{P2} in \eqref{p2} is transformed into
\begin{align}\label{p21}
\textbf{P2-1}:\;\;&\mathop {\max} \limits_{\mathcal{U}}\left(\!\sum_{k=1}^{K}\frac{\!\omega_1B}{L}f(u_k)\!\log_{2}\!\left(1\!+\!\frac{\vert\left(\hslash_k\!+\!\mathbf{h}_k \mathbf{\Theta}_k \mathbf{g}_k\!\right)\!\rho\vert^2}{\sigma^2}\!\right)+\sum_{k=1}^{K}\frac{\omega_2B}{K-L}(1-f(u_k))\log_{2}\left(1+\frac{\vert \hslash_k\rho\vert^2}{\sigma^2}\right)\right) 
\notag
\\
\bf{s.t.}\ \ \ \ 
&\text{C1-C6.}\;\;
\end{align}

Referring to the proposed RIS group partition in Section \ref{sec3b}, the optimal $\mathcal{U}^*$ will be obtained as the RIS controller solves the optimal $\mathcal{F}^*$. Therefore, Problem \textbf{P2-1} in \eqref{p21} can be transformed into an RIS-assisted transmission decision problem, which is illustrated as
\begin{align}\label{p21a}
\textbf{P2-1a}:\;&\mathop {\max} \limits_{\mathcal{F}}\left(\!\sum_{k=1}^{K}\frac{\!\omega_1B}{L}f(u_k)\!\log_{2}\!\left(1\!+\!\frac{\vert\left(\hslash_k\!+\!\mathbf{h}_k \mathbf{\Theta}_k \mathbf{g}_k\!\right)\!\rho\vert^2}{\sigma^2}\!\right)+\sum_{k=1}^{K}\frac{\omega_2B}{K-L}(1-f(u_k))\log_{2}\left(1+\frac{\vert \hslash_k\rho\vert^2}{\sigma^2}\right)\right)\notag
\\ 
\bf{s.t.}\ \ \ \ \
&\text{C3-C6.}\;\;
\end{align}
When the number of UAV-user pairs is small, we can obtain a high-quality global optimal solution of problem \textbf{P2-1a} by using the exhaustive method. Since $f(u_k)$ equals to ``$0$" or ``$1$", the scale of $\mathcal{F}=\{f(u_1),f(u_2),\ldots,f(u_K)\}$ is $2^K$. Therefore, we can solve the optimal $\mathcal{F}^*$ to maximize the system capacity by searching each case of $\mathcal{F}$. According to the optimal $\mathcal{F}^*$ and the proposed scheme of RIS group partition, the optimal $\mathcal{U}^*$ can be solved.    

\subsubsection{RIS phase shifts configuration optimization}\label{sec4c2} Given the RIS elements allocation, $\mathcal{U}$, problem \textbf{P2} in \eqref{p2} can be transformed as
\begin{align}\label{p22}
\textbf{P2-2}:\;&\mathop {\max} \limits_{\mathbf{\Psi}}\left(\!\sum_{k=1}^{K}\frac{\!\omega_1B}{L}f(u_k)\!\log_{2}\!\left(1\!+\!\frac{\vert\left(\hslash_k\!+\!\mathbf{h}_k \mathbf{\Theta}_k\mathbf{g}_k\!\right)\!\rho\vert^2}{\sigma^2}\!\right)+\!\sum_{k=1}^{K}\!\frac{\omega_2B}{K\!-\!L}(1\!-\!f(u_k))\log_{2}\!\left(1\!+\!\frac{\vert \hslash_k\rho\vert^2}{\sigma^2}\!\right)\!\right)\!\notag 
\\
\bf{s.t.}\ \ \ \
&\text{C8-C9.}
\end{align}

When $\mathcal{U}$ is given, the optimal $\mathbf{\Psi}$ is the one that maximizes the channel gain of each UAV-user pair with the assistance of the RIS, i.e., solve $\bm{\theta}_k$ to maximize $\vert\left(\hslash_k\!+\!\mathbf{h}_k \mathbf{\Theta}_k \mathbf{g}_k\!\right)\vert^2$. On this basis, the optimal problem of $\mathbf{\Psi}$ can be transformed into the following problem:
\begin{align}\label{p22a}
\textbf{P2-2a}:\;\;&\mathop {\max} \limits_{\bm{\theta}_k}\ \vert \hslash_k +\mathbf{h}_k \mathbf{\Theta}_k \mathbf{g}_k \vert^2\ \ \ \ \ \ \ \ \ \ \ \ \ \ \ \ \ \ \ \ \ \ \ \ \ \ \ 
\notag 
\\
\bf{s.t.}\ \ \ \ \
&\text{C8-C9.}
\end{align}

Due to the non-convexity of problem \textbf{P2-2a} in \eqref{p22a}, in the following, we solve this problem in detail, as highlighted in\textbf{ Remark \ref{r1}}. 

\begin{remark}\label{r1}
The optimal solution of problem in \eqref{p22a} is $\theta_k^{{l_n}^*} = \arg(\hslash_k)-\arg({h}_{k}^{l_n})-\arg({g}_{k}^{l_n})$, where $l_n \in \mathbf{l}_l$. With considering a same reflection coefficient for all elements in one RIS group (i.e., $\theta_k^l=\theta_k^{{l_n}},\forall l_n \in \mathbf{l}_l$), we have $\theta_k^{{l}^*} = \arg(\hslash_k)-\arg({h}_{k}^{l})-\arg({g}_{k}^{l})$. 
\end{remark}

\begin{proof}
For the RIS phase shifts configuration optimization of problem \textbf{P2-2a}, we have the following equality
\begin{align}\label{ph}
\centering
\vert \hslash_k +\mathbf{h}_k \mathbf{\Theta}_k \mathbf{g}_k \vert^2 \le \vert \hslash_k \vert^2+\vert \mathbf{h}_k \mathbf{\Theta}_k \mathbf{g}_k\vert^2 . 
\end{align} 
The equality in \eqref{ph} holds only when RIS reflection coefficients satisfies $\arg(\hslash_k) \triangleq \arg(\mathbf{h}_k \mathbf{\Theta}_k \mathbf{g}_k)$. To optimize $\theta_k^{{l_n}}, l_n \!\in\!\mathbf{l}_l$, let $\mathbf{h}_k \mathbf{\Theta}_k \mathbf{g}_k\!=\!\mathbf{v}_k\mathbf{\Phi}_k$, where $\mathbf{v}_k\!=\![v_k^{l_1},\ldots, v_k^{l_{N/L}}]\!\in\!\mathbb{C}^{1\times(N/L)}$, $v_k^{l_n}\!=\!e^{j\theta_k^{{l_n}}}, \forall l_n \in \mathbf{l}_l$, and $\mathbf{\Phi_k}\!=\!\text{diag}(\mathbf{h}_k)\mathbf{g}_k$. problem \textbf{P2-2a} can be simplified as 
\begin{align}\label{p22b}
\textbf{P2-2b}:\;\;&\mathop {\max} \limits_{\mathbf{v}_k}\ \vert \mathbf{v}_k \mathbf{\Phi}_k \vert^2\ \ \ \ \ \ \ \ \ \ \ \ \ \ \ \ \ \ \ \ \ \ \ \ \ \ \ \\
\notag 
\bf{s.t.}\ \
&\text{C10:}\ \  \vert {v}_k^{l_n} \vert^2=1, \ \ \forall l_n \in \mathbf{l}_l.
\end{align}
It is observed that the optimal RIS phase shifts on the $l$-th RIS group for the $k$-th UAV-user pair can be obtained by given $\mathbf{v}_k^*=e^{j({\arg(\hslash_k)-\arg(\text{diag}(\mathbf{h}_k)\mathbf{g}_k))}}$, and then we have  
\begin{align}\label{p22bb}
\theta_k^{{l_n}^*} = \arg(\hslash_k)-\arg({h}_{k}^{l_n})-\arg({g}_{k}^{l_n}),
\end{align}
where $h_k^{l_n}\in \mathbf {h_k}$, $g_k^{l_n}\in \mathbf {g_k}$, $\forall l_n\in \mathbf{l}_l, \forall n \in (1,N/L)$. It is noting that if a same reflection coefficient is simplified for all elements in one RIS group, the optimal reflection coefficient of the $l$-th RIS group is $\theta_k^{{l}^*} = \arg(\hslash_k)-\arg(h_{k}^{l})-\arg(g_{k}^{l})$, where $h_k^{l_n} = {h_k^l}$, $g_k^{l_n} = {g_k^l}$, $\forall l_n\in \mathbf{l}_l$. In other words, each element on the same RIS group configures its phase shift to the same reflection coefficient to assist one UAV-user pair's communication.  
\end{proof}

We analyze the complexity of solving problem \textbf{P2}, as illustrated in \textbf{Remark \ref{r2}}. 
\begin{remark}\label{r2}
The complexity of solving the problem in \eqref{p2} is dominated by calculating $\cal U$ in \eqref{p21} and $\mathbf{\Psi}$ in \eqref{p22}, the complexity of each is ${\cal O} (2^K)$ and ${\cal O} (N)$, respectively. Hence, the total complexity of solving the problem in \eqref{p2} is ${\cal O} (2^KN\ell)$, where $\ell$ is the number of iterations. We can see that the complexity of Problem \textbf{P2} solution increases with the number of RIS elements and the number of UAV-user pairs. Especially, when all elements of one RIS group are assumed to maintain the same reflection coefficient, the complexity can be decreased to ${\cal O} (2^KL\ell)$.
\end{remark}

The real-time RIS elements allocation and phase shifts configuration are costly to implement in practice because manufacturing such high-precision elements requires substantial computation and expensive hardware. However, the conventional optimization procedure must be executed repeatedly as the parameters change, which will incur high computational complexity due to numerical iterations, and its solutions are often suboptimal and do not scale well. An artificial intelligence-based method that moves the complexity of online computation to offline training is a feasibility scheme to improve the efficiency of obtaining the problem solution, thereby improving RIS utilization with high accuracy and low cost. As such, the multi-task learning method is explored to address the real-time optimization problem efficiently and improve RIS utilization.

\begin{figure}[t]	
\centering{\includegraphics[height=2.2in,width=2.5in]{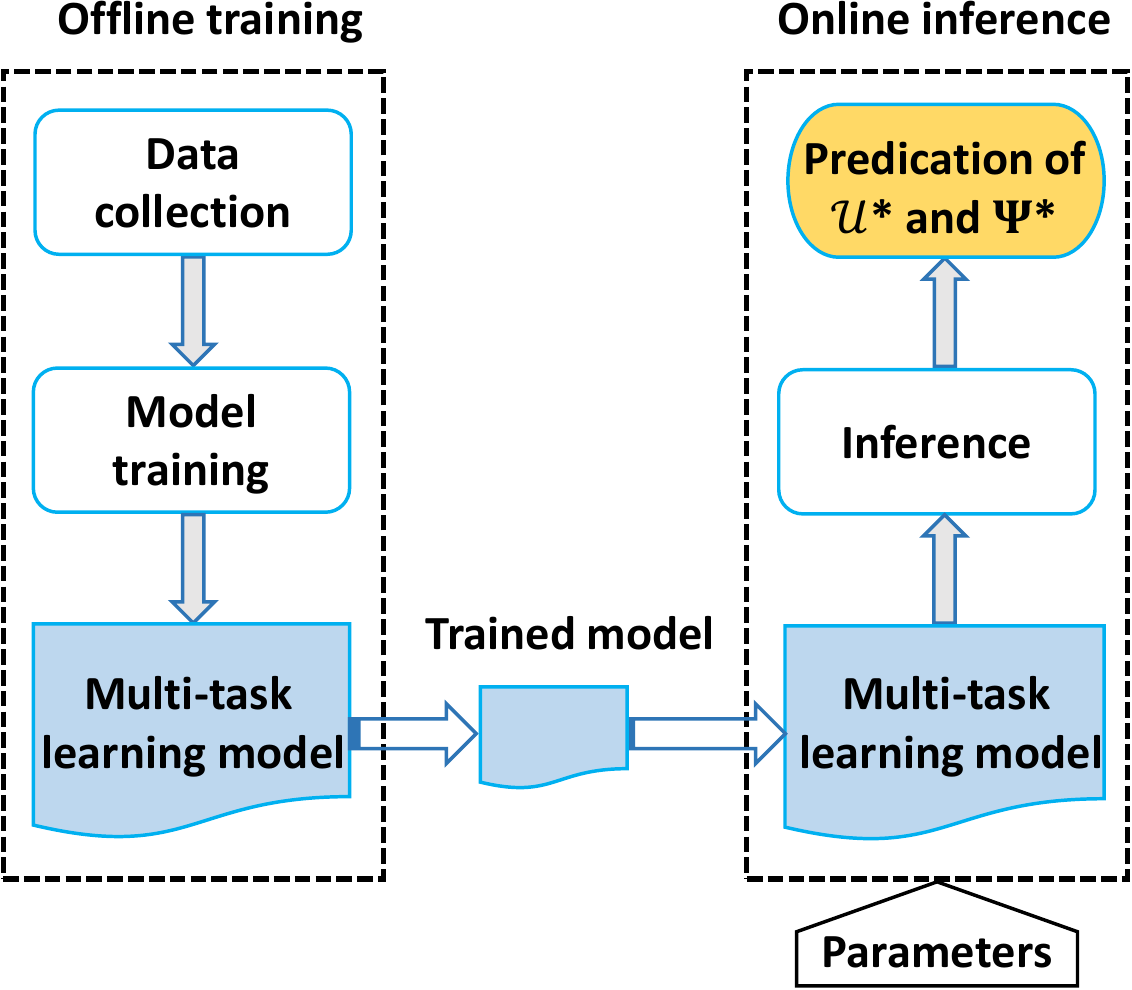}}
	\caption{The proposed multi-task learning model.}
	\label{LM}
\end{figure}

\section{Multi-task Learning for Transmission Strategy on The RIS}\label{sec5}
In this section, we present a multi-task learning model instead of the conventional iterative method to speed up the decision process and decrease the computation complexity at the RIS controller.

\subsection{Multi-task Learning Model}\label{sec5a}
As is shown in Fig. \ref{LM}, deep neural networks can be designed and trained offline to obtain a multi-task learning model. By feeding the parameters (such as the number of UAV-user pairs, RIS size, and channel conditions) into the trained multi-task learning model, the optimal transmission strategy of each UAV-user pair (including the optimal RIS elements allocation and the optimal RIS phase shifts configuration) can be predicted at the output. Since moving the complexity of online computation to offline training, the complexity of solving problem \textbf{P2} is incurred by online inference. Therefore, the solution to \textbf{P2} can be obtained efficiently by performing feedforward calculation without iterations \cite{yang2020computation}, the complexity can be significantly decreased to ${\cal O} (1)$.

\subsection{Problem Transformation}\label{sec5b}
Owing to the interaction between two independent output vectors (i.e., $\mathcal{U}^*$ and $\mathbf{\Psi}^*$) of problem $\mathbf{P2}$, we consider the prediction of $\mathcal{U}^*$ and $\mathbf{\Psi}^*$ as two individual machine learning tasks. Compared to training the models separately, the learning efficiency and prediction accuracy can be improved with the multi-task learning structure \cite{SR}. On this basis, we formulate $\mathbf{P2}$ as a multi-task learning problem in Fig. \ref{AI}. We suppose that there exist total $A$ learning tasks $\left \{ {\Omega}_i \right\}{_{i=1}^{A}}$ that are interactive with each other, where ${A}=2$ in our proposed multi-task learning model. Each learning task ${\Omega}_i$ is performed with a training dataset ${\Upsilon}_i$, which consists of $J$ training samples. Then, we have 
\begin{equation}\label{e23}
{\Upsilon}_i = \left \{ \mathbf{X}_{j}^{(i)},\mathbf{Y}_{j}^{(i)} \right \}, \ j=1, \ldots , J,
\end{equation}
where $\mathbf{X}_{j}^{(i)}$ is the $j$-th training instance in ${\Omega}_i$, $\mathbf{Y}_{j}^{(i)}$ represents its label. 

For the output, denote $y_{j}^{(i)}$ as the $j$-th corresponding output from $\mathbf{Y}_{j}^{(i)}$. When $y_{j}^{(i)}$ is in a discrete space, e.g., $y_{j}^{(i)}\in \left \{ 0, 1 \right \}$ for the binary RIS-assisted transmission decision, the corresponding task can be regarded as a \textit{classification problem} aiming to predict RIS allocation for a given set of input parameters. If $y_{j}^{(i)}$ is continuous, e.g., $y_{j}^{(i)}\in \mathbb{R}$ is for the phase shift of the RIS element, the corresponding task can be transformed into a \textit{regression problem}, which aims to predict a numeric value. For simplicity, in the proposed regression model, we predict $\mathbf{\Psi}^*/2\pi$ instead of predicting $\mathbf{\Psi}^*$, which makes the prediction of the regression model becomes a normalized ratio of RIS phase shifts.

\begin{figure*}[t]	
\small
\centering{\includegraphics[width=1\textwidth]{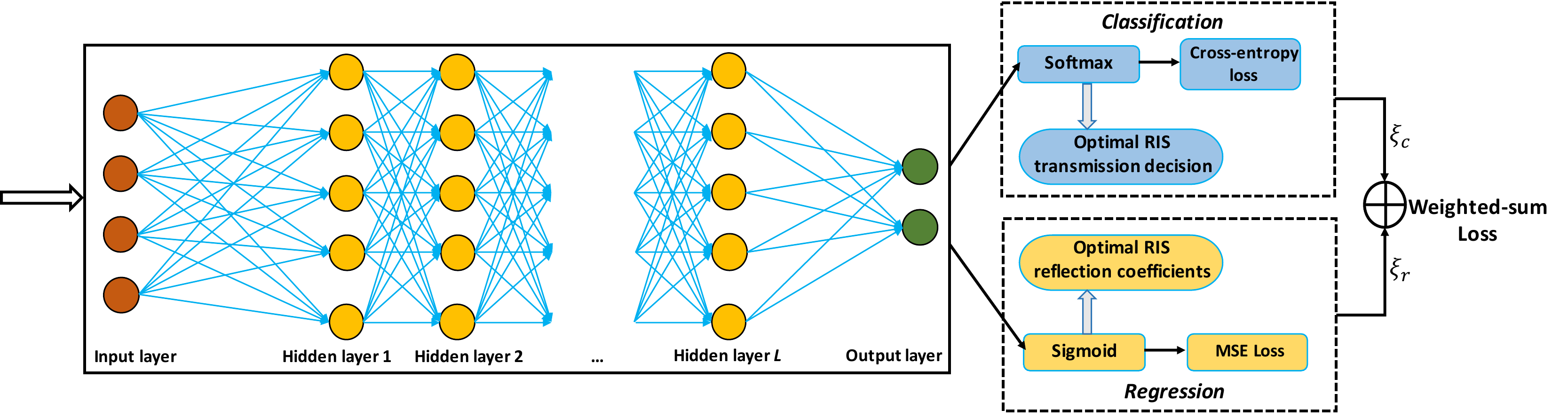}}
	\caption{The framework of the trained multi-task learning model.}
	\label{AI}
\end{figure*}

\renewcommand{\algorithmicrequire}{\textbf{Initialization:}} 
\renewcommand{\algorithmicensure}{\textbf{Output:}} 
\begin{algorithm}[t]         
\caption{Dataset Collection}             
\label{alg2}                  
\begin{algorithmic}[1]             
\REQUIRE  Iteration index $i=0$ and dataset $\Upsilon=\varnothing$;
\ENSURE Dataset $\Upsilon$ \\  
\WHILE    {$i< {\rm dataset\ size} \ $}
\STATE $i\leftarrow i+1$;
\STATE Generate input parameters set ($\mathbf{X}_i$) for all UAV-user pairs;
\STATE Solve $\mathbf{P2}$ by using exhaustive searching and record the optimal solution as $\mathbf{Y}_{i} = ( {\mathcal U_i}^*$, ${\mathbf \Psi}_i^*/2\pi)$;
\STATE Add the $i$-th input/output pair $\left \{ {\mathbf{X}}_{i},\mathbf{Y}_{i} \right \}$ to $\Upsilon$.
\ENDWHILE             
\end{algorithmic}
\end{algorithm}

\renewcommand{\algorithmicrequire}{\textbf{Initialization:}}  
\renewcommand{\algorithmicensure}{\textbf{Output:}} 
\begin{algorithm}[t]       
\caption{Offline Training and Online Inference}
\label{alg3}             
\textit{\textbf{1) Offline Training}}\;                  
\begin{algorithmic}[1]           
\STATE Build dataset with \textit{Algorithm 1};
\STATE Train the classifier with loss function $\iota_c$ given in (\ref{lc});
\STATE Train the regressor with loss function $\iota_r$ given in (\ref{lr});
\STATE Achieve the weighted-sum loss function $\iota$ based (\ref{l});
\STATE Tune the weights of each layer using backpropagation until $\iota$ is minimized.
\end{algorithmic}
\textit{\textbf{2) Online Inference}\;}                 
\begin{algorithmic}[1]           
\REQUIRE $K$, $N$, $B$, $\omega_1$, $\omega_2$, $\rho^2$, $P_{max}$, $P_{R}$, $P_{k, U}$, $I$, $L_{max}$, and $i=0$;
\ENSURE Optimal transmission strategy $\mathbf{D_i}^*$; 
\WHILE {$ i< I \ $}
\STATE $i\leftarrow i+1$;
\STATE Input UAVs' parameters to the pre-trained multi-task learning model at the RIS controller;
\STATE RIS controller predicts the transmission decision $\mathcal{U}_i^*$ and RIS phase shifts $\mathbf{\Psi}_i^*/2\pi$;
\STATE RIS controller obtains the optimal transmission strategy $\mathbf{D}_i^*=\{\mathcal{U}_i^*,{\mathbf{\Psi}_i}^*\}$;
\STATE RIS controller manages RIS elements allocation and RIS reflection coefficients configuration according to $\mathbf{D}_i^*$;
\ENDWHILE

\end{algorithmic}
\end{algorithm} 

\subsection{Dataset Collection, Offline Training, and Online Inference}\label{sec5c}

The process of training dataset generation and collection are shown in \textbf{Algorithm~\ref{alg2}}. Here, we obtain the dataset using the exhaustive searching to solve the optimization problem \textbf{P2}. Note that although the input parameters during offline training do not include every possible combination of the parameters, they do cover a wide range of parameter settings, and the well-known generalization property of machine learning models will enable the trained deep learning model to produce accurate inference of the solution, even for parameter settings not included in the training samples. Therefore, generalization can be achieved by designing a proper deep learning model with enough labelled data. This indicates that the dataset input to the multi-task learning model is with very good quality, which guarantees the accuracy on solving the optimization problem.
 
The framework of the trained multi-task learning model highlighted in Fig.~\ref{AI}, in which the offline training and the online inference are illustrated in \textbf{Algorithm~\ref{alg3}}, respectively. During the offline training, based on the collected dataset, we use back-propagation to train the multi-task learning model. The online inference process can be divided into one classification problem and one regression problem, both of them are combined to predict $\mathcal{U}$ and $\mathbf{\Psi}$ for all UAV-user pairs.

For the classification problem, the probability of each class is predicted using the Softmax function, i.e., the predicted probability for the $q$-th class is given as
\begin{equation}\label{si}
\varrho_q(z) =\frac{e^{z_q}}{\sum_{q=1}^{Q} e^{z_q}}, \ q=1,\ldots,Q,
\end{equation}
where $Q$ is the total number of classes, $z$ is the output of the last fully connected layer. 

The loss function of the classification, denoted by $\iota_c$, is defined as the cross-entropy. Then, we have
\begin{equation}\label{lc}
{\iota_c}=-\frac{1}{Q}\sum_{q=1}^{Q}\mathbf{Y}_q{\rm{ln}}g(\mathbf{X}_q),
\end{equation}
where ${\mathbf{X}}_q$ is input devices' context, $\mathbf{Y}_q$ denotes the ground truth, and $g(\mathbf{X}_q)$ is the actual output of neurons.

For the regression problem, with Sigmoid function, the numerical ratio value can be mapped between 0 and 1, then the prediction value is given by 
\begin{equation}\label{si1}
\varrho(z) =\frac{1}{1+e^{-z}}.
\end{equation} 

We define $M$ as the number of input samples. By using the mean square error (MSE), the loss function of the regression, denoted by $\iota_r$, is expressed as
\begin{equation}\label{lr} 
{\iota_r}=\frac{1}{M}\sum_{i=1}^{M} \left (\mathbf{Y}_i- g(\mathbf{X}_i) \right )^2.
\end{equation}

The loss function of the proposed multi-task learning model is regarded as the weighted-sum of ${\iota_c}$ and ${\iota_r}$, which is given by
 \begin{equation}\label{l}
\iota = \xi_c {\iota_c}+\xi_r {\iota_r},
\end{equation}
where $\xi_c$ and $\xi_r$ denote the weights of the classification and the regression, respectively.

\begin{figure}[t]	
\centering{\includegraphics[width=0.5\textwidth]{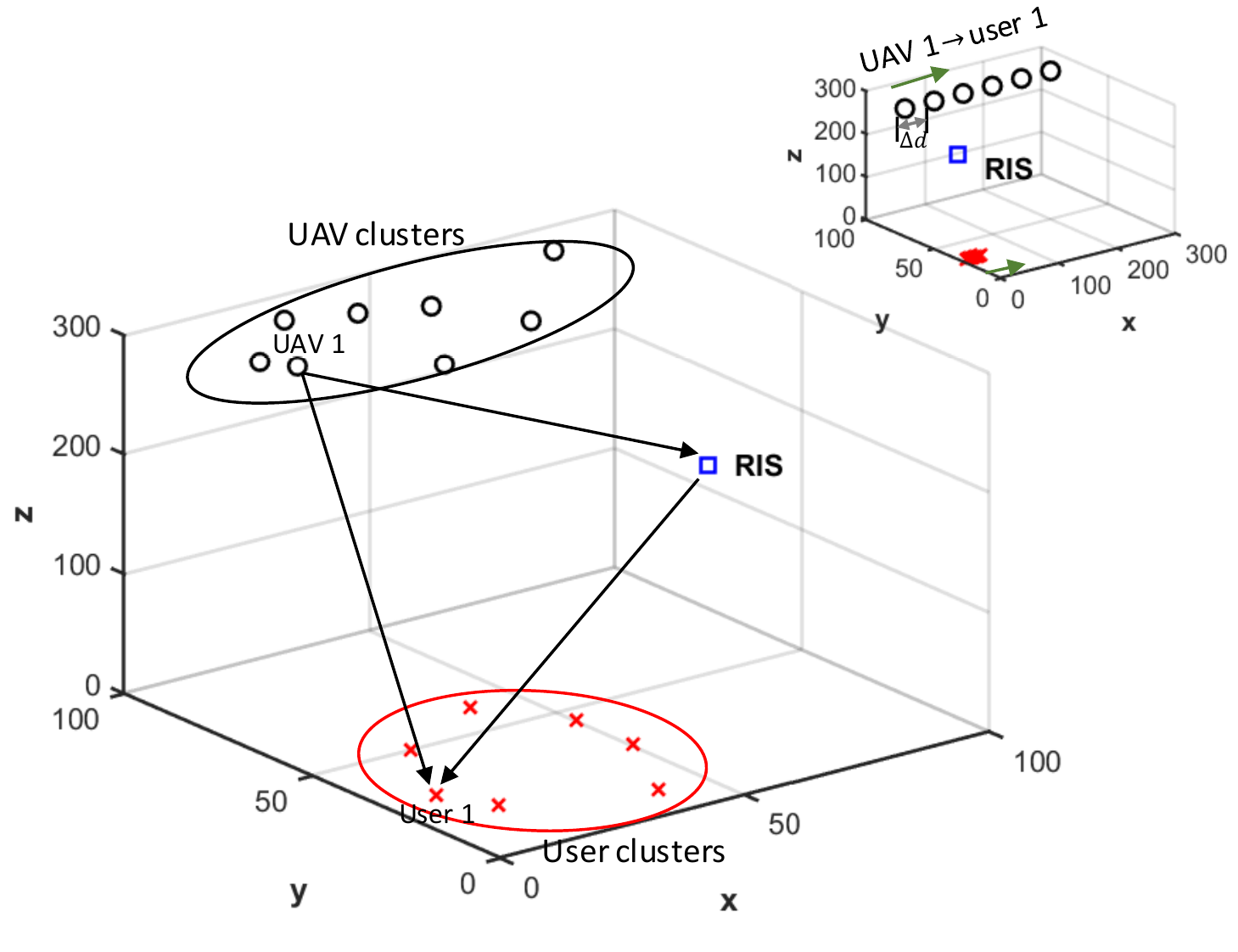}}
	\caption{The simulation scenario.}
	\label{SC}
\end{figure}

\section{Simulation Results}\label{sec6}
We provide the simulation results to validate the effectiveness of RIS-assisted aerial-terrestrial communications. The simulation scenario is presented in Fig. \ref{SC}, where the UAV clusters and user clusters are randomly deployed within a 3D area of 100 m $\ast$ 100 m $\ast$ 300 m, all UAVs and users maintain a certain altitude. Furthermore, we consider the UAV-user pair only move at x-axis along the same direction, the moving speed of UAV and user is 25 $m/s$ and 0.5 $m/s$, respectively. The location of RIS is fixed, and the other main parameters are set as in Table \ref{SP}.

\begin{table}[t] 
		\small 
		\centering
			\renewcommand{\arraystretch}{1.2}
			\captionsetup{font={small}} 
			\caption{\scshape Simulation parameters} 
			\label{SP}
			\footnotesize
			\centering  
			\begin{tabular}{| c | c | c | c |}  
				\shline
				\textbf{Notation} & \textbf{Definition} &\textbf{Notation} & \textbf{Definition}\\
				\shline 
				UAV-user pairs \!($K$\!) & 8 & RIS elements \!($N$\!) & 512\\ 
			    \hline 
			    RIS groups \!($L$\!) & [1  2  4  8]  & Sub-carriers \!($C$\!) & 8\\ 
			    \hline 
			    Noise power \!($\sigma^2$\!) & -94 dBm & UAVs power \!($\rho^2$\!) & 10 mW\\
			    \hline 
				UAV antenna gain & 10 dBi & Bandwidth & 10 MHz \\ 
				\hline 
				User antenna gain & 5 dBi  & Frequency & 5 GHz \\  
			    \hline
			    UAVs altitude & (250, 300) m  & Users altitude & (0, 1) m\\  
			   	\hline
			   	User 1 location &  \![10  30  1] m & Weight \!($\omega_1,\omega_2$) & 0.6, 0.4\\ 
			   	\hline
			   	UAV 1 location & \![20  80  280] m & Weight \!($\xi_1,\xi_2$) & 0.5, 0.5 \\ 
			    \hline 
			    RIS location & \![\!100 75 120] m & Frame length &  1 ms\\ 
			    \shline

			\end{tabular}  
	\end{table}
	
Figure \ref{S1} evaluates the SNR, the transmit power, and the throughput in terms of the number of RIS groups and the number of UAV-user pairs. Specifically, Fig. \ref{S1} (a) shows that the SNR of UAV-User pair 1 (Pair $\#$1 in short) varies as the number of RIS groups ($L$) increases when Pair $\#$1 moves at the x-axis along the same direction. On the one side, we can see that the SNR decreases as the number of RIS groups increases because fewer RIS elements can be utilized in each RIS group. On another side, we also see the SNR varies as the location of Pair $\#$1 (i.e., the moving distance ($\Delta d$) at the x-axis for UAV 1 is (0; 50; 100; 150; 200; 250), and for user 1 is (0; 5; 10; 15; 25; 25), where the unit is meter (m)). It can be seen that the best SNR can be achieved at the location [120 80 280; 20 30 1] of Pair $\#$1, and then followed by [70 80 280; 15 30 1], [170 80 280; 20 30 1], [20 80 280; 10 30 1], and [220 80 280; 30 30 1]. The worst SNR is observed at the location [270 80 280; 35 30 1] of Pair $\#$1. It can be seen that the SNR is affected by the channel gain between Pair $\#$1 and the RIS. In other words, a higher SNR can be achieved when Pair $\#$1 is close to the RIS, and a lower SNR exists when Pair $\#$1 is far away from the RIS. Figure \ref{S1} (b) shows that the transmit power changes as the number of RIS groups increases when Pair $\#$1 moves at the x-axis along the same direction. Compared with the SNR in Fig. \ref{S1} (a), the transmit power displays an opposite changing trend as the number of RIS groups and the location of Pair $\#$1 change. It means that the fewer RIS groups and the closer distance to the RIS can reduce the energy consumption of Pair $\#$1. Figure \ref{S1} (c) shows that the throughput decreases as the number of UAV-user pairs ($K$) and the maximum number of RIS groups ($L_{max}$) increase. Four schemes are presented to evaluate aerial-terrestrial communications, i.e., without the RIS, with the RIS random phase shift, with the alternating algorithm-based RIS, and with the multi-task learning method (MTL)-based RIS. Compared to aerial-terrestrial communications without the RIS and RIS random phase shift, RIS-assisted aerial-terrestrial communications adopted the alternating algorithm and the proposed MTL method show significant improvement (increase from 0.06 Mbps to 28 Mbps) since the optimal RIS configuration can enhance the links performance of UAV-user pairs. Moreover, compared to the alternating algorithm, the MTL method achieves better throughput and also with much lower complexity. Given a fixed $L_{max}$ as the alternating algorithm or the MTL method optimizes the RIS, the system throughput decreases as the number of UAV-user pairs increases because the poor direct link communications need to consume the bandwidth. Besides, given a fixed $K$, the system throughput decreases as the maximum number of RIS groups increases because the less bandwidth and fewer RIS elements can be used for the UAV-user pairs that have the better channel states. Note that the large $L_{max}$ means that the RIS can serve the more UAV-user pairs at a time to improve their links.

\begin{figure*}[t]
\small
      \centering
    	\subfigure[SNR vs. number of RIS groups]{
    		\begin{minipage}[b]{0.315\textwidth}
    			\includegraphics[width=1\textwidth]{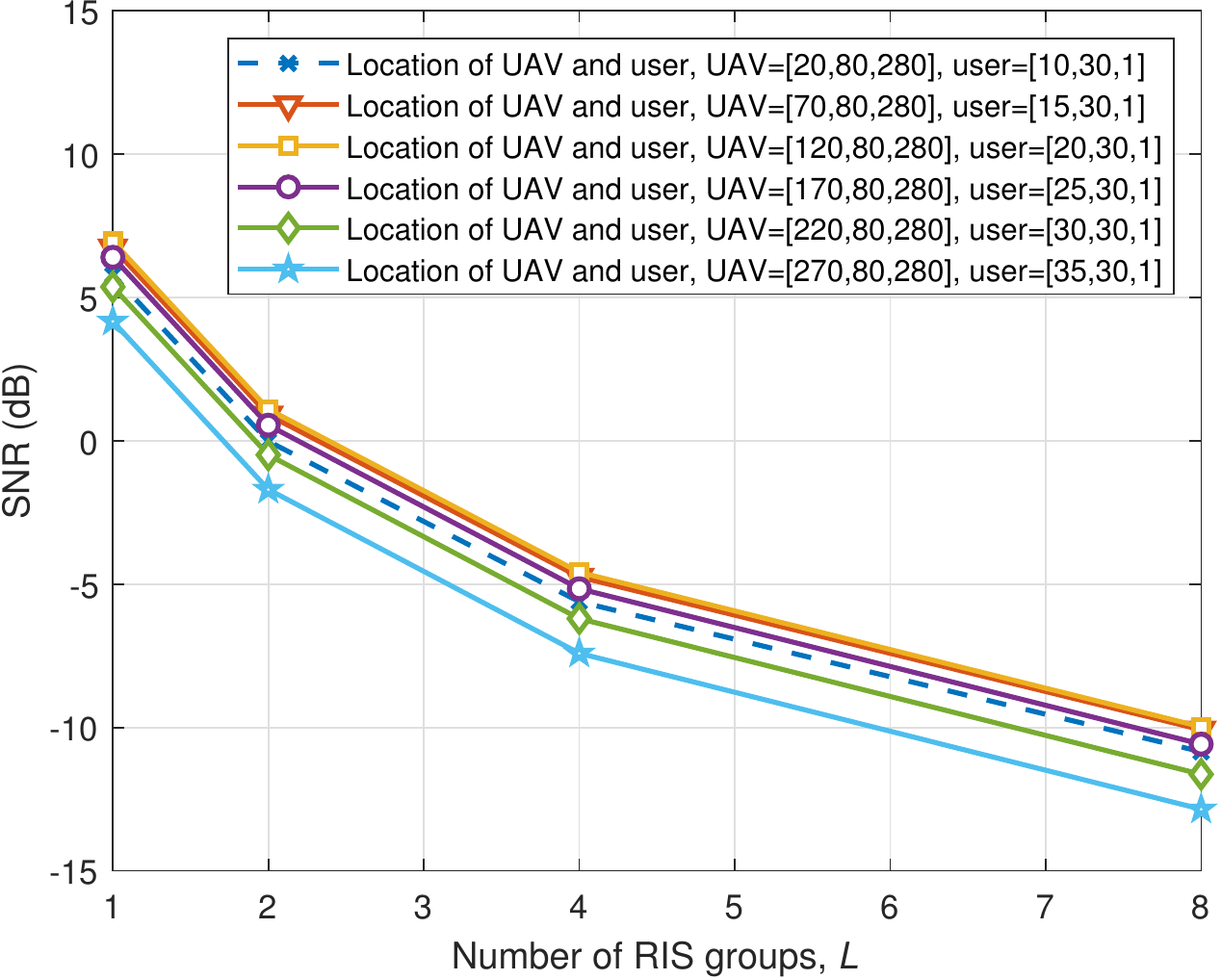}
    		\end{minipage}
    	}
    	\hfill
    	\subfigure[Transmit power vs. number of RIS groups]{
    		\begin{minipage}[b]{0.31\textwidth}
    			\includegraphics[width=1\textwidth]{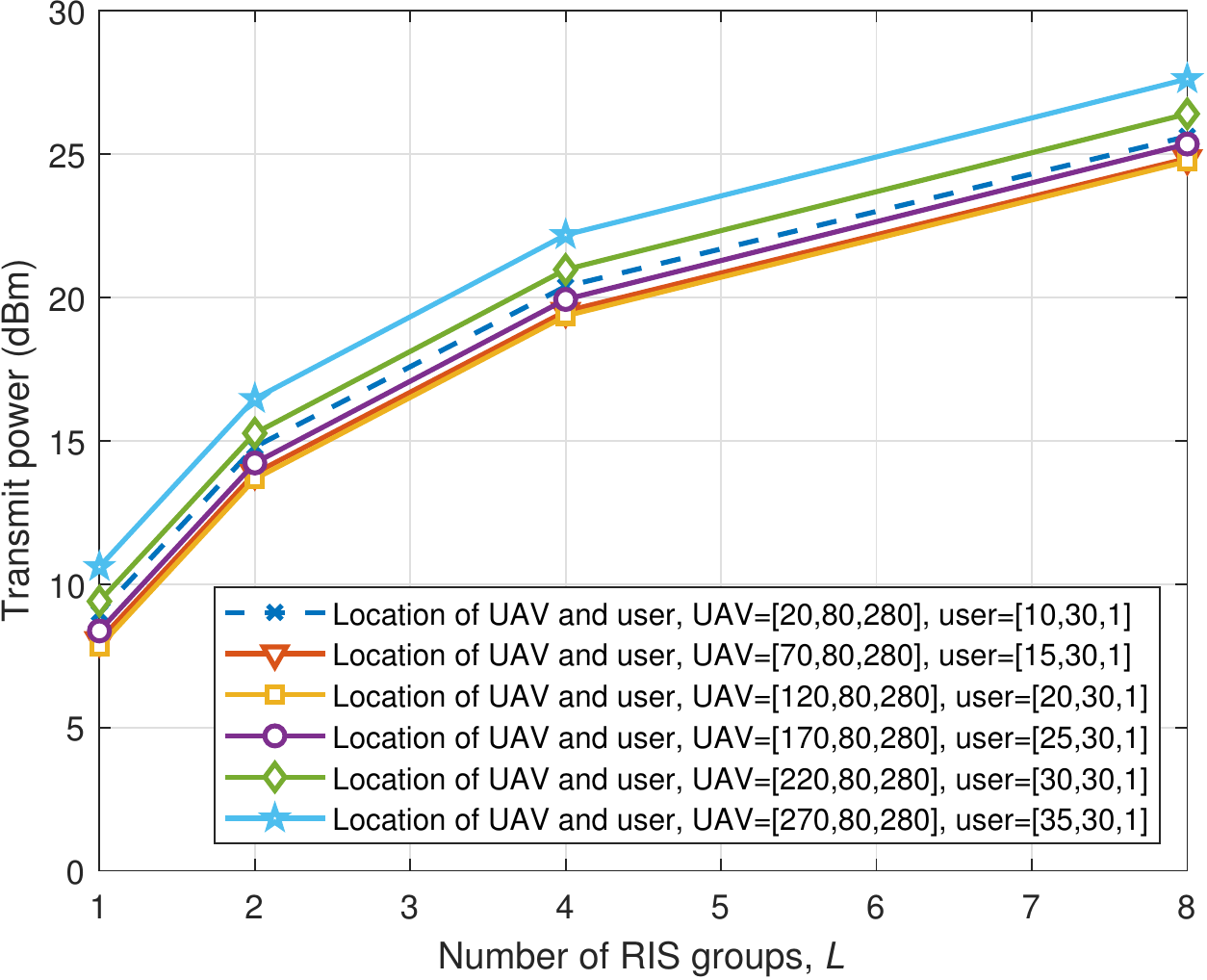}
    		\end{minipage}
    	}
    	\hfill
    	\subfigure[Throughput vs. number of UAV-user pairs]{
    		\begin{minipage}[b]{0.32\textwidth}
    			\includegraphics[width=1\textwidth]{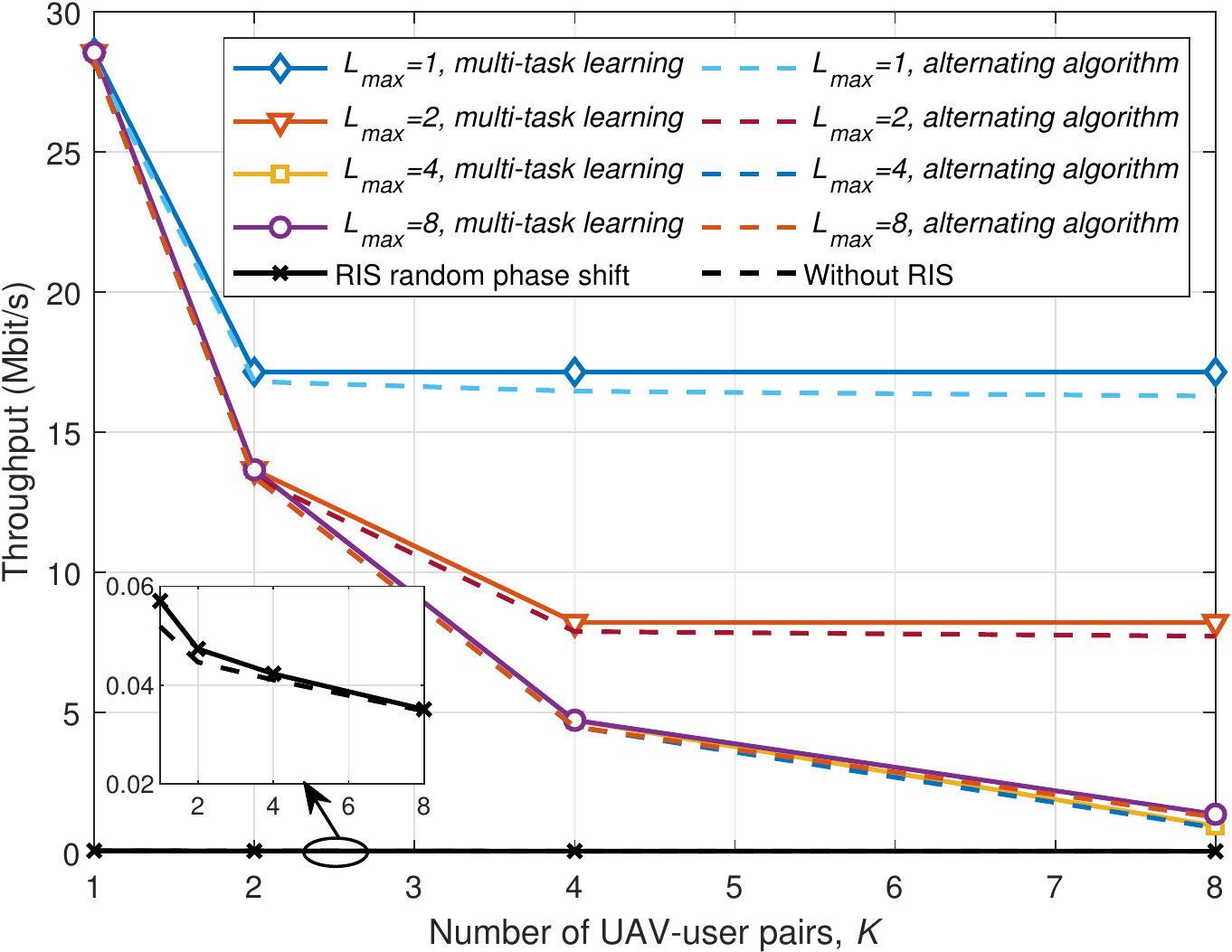}
    		\end{minipage}
    	}
    	\caption{SNR, transmit power, and throughput versus the number of RIS groups and UAV-user pairs.} 
    	\label{S1}
\end{figure*}

\begin{figure*}[t]
\small
      \centering
    	\subfigure[SNR vs. moving distance of the UAV-user pair]{
    		\begin{minipage}[b]{0.315\textwidth}
    			\includegraphics[width=1\textwidth]{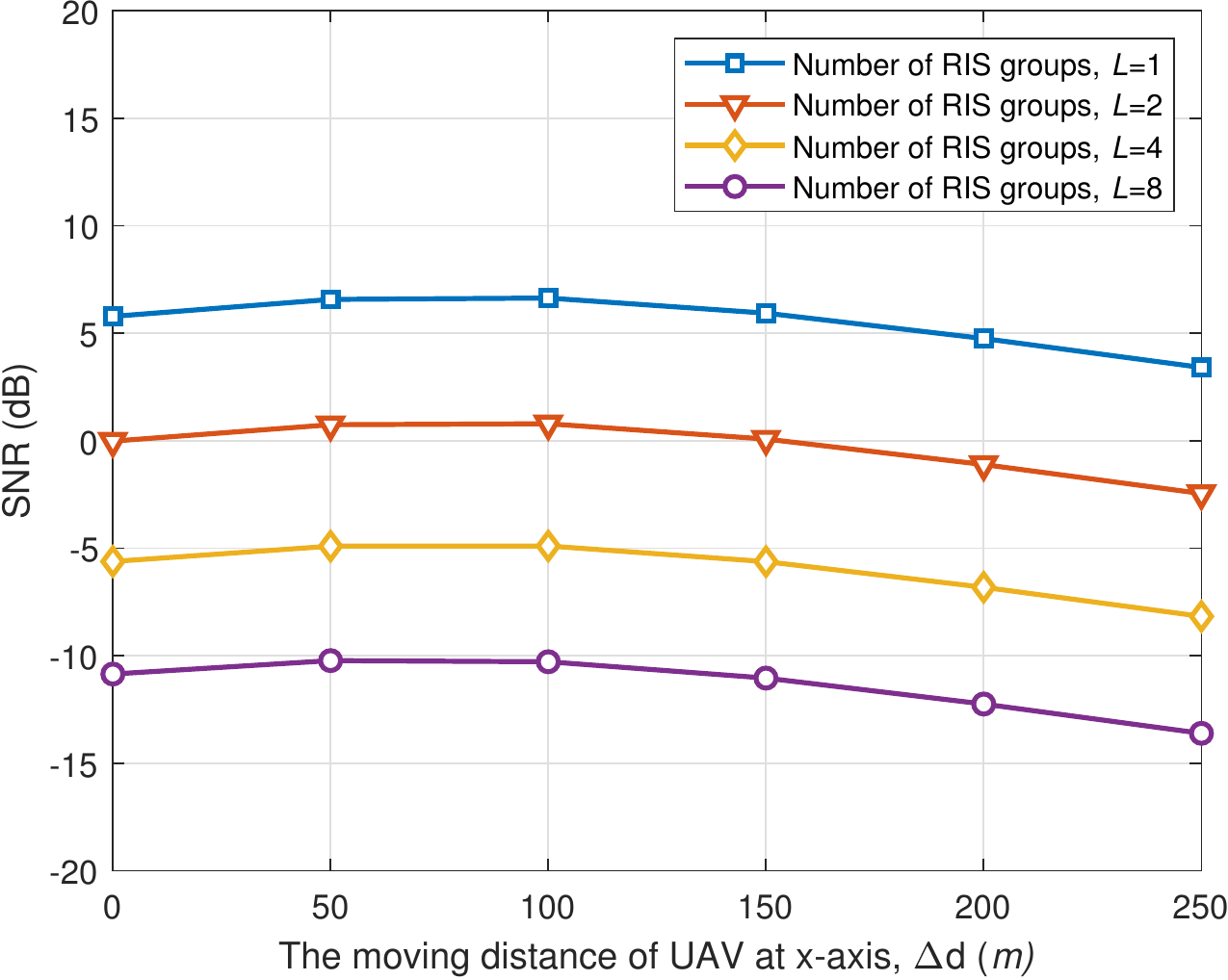}
    		\end{minipage}
    	}
    	\hfill
    	\subfigure[Transmit power vs. moving distance of the UAV-user pair]{
    		\begin{minipage}[b]{0.315\textwidth}
    			\includegraphics[width=1\textwidth]{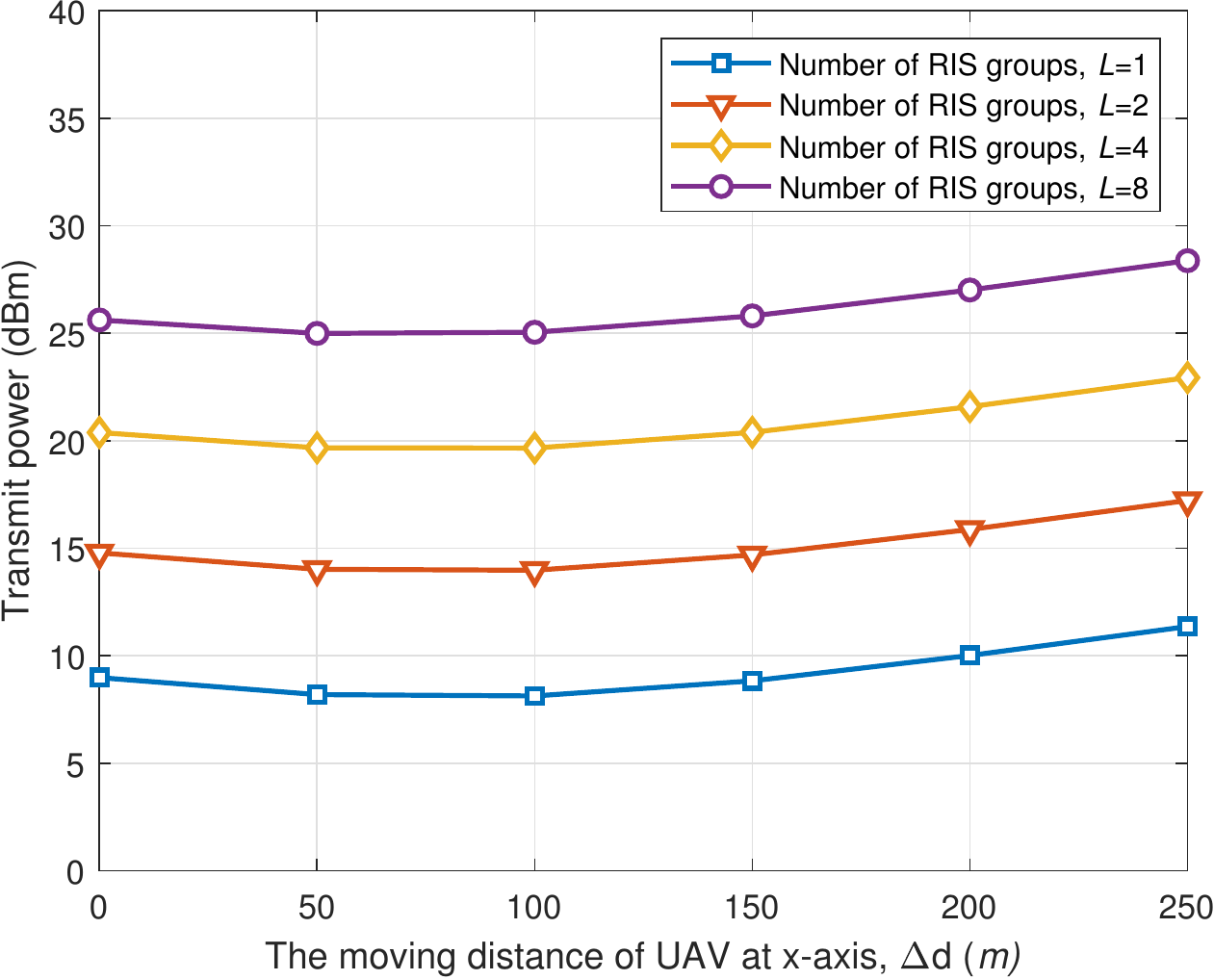}
    		\end{minipage}
    	}
    	\hfill
    	\subfigure[Throughput vs. moving distance of UAV-user pairs]{
    		\begin{minipage}[b]{0.315\textwidth}
    			\includegraphics[width=1\textwidth]{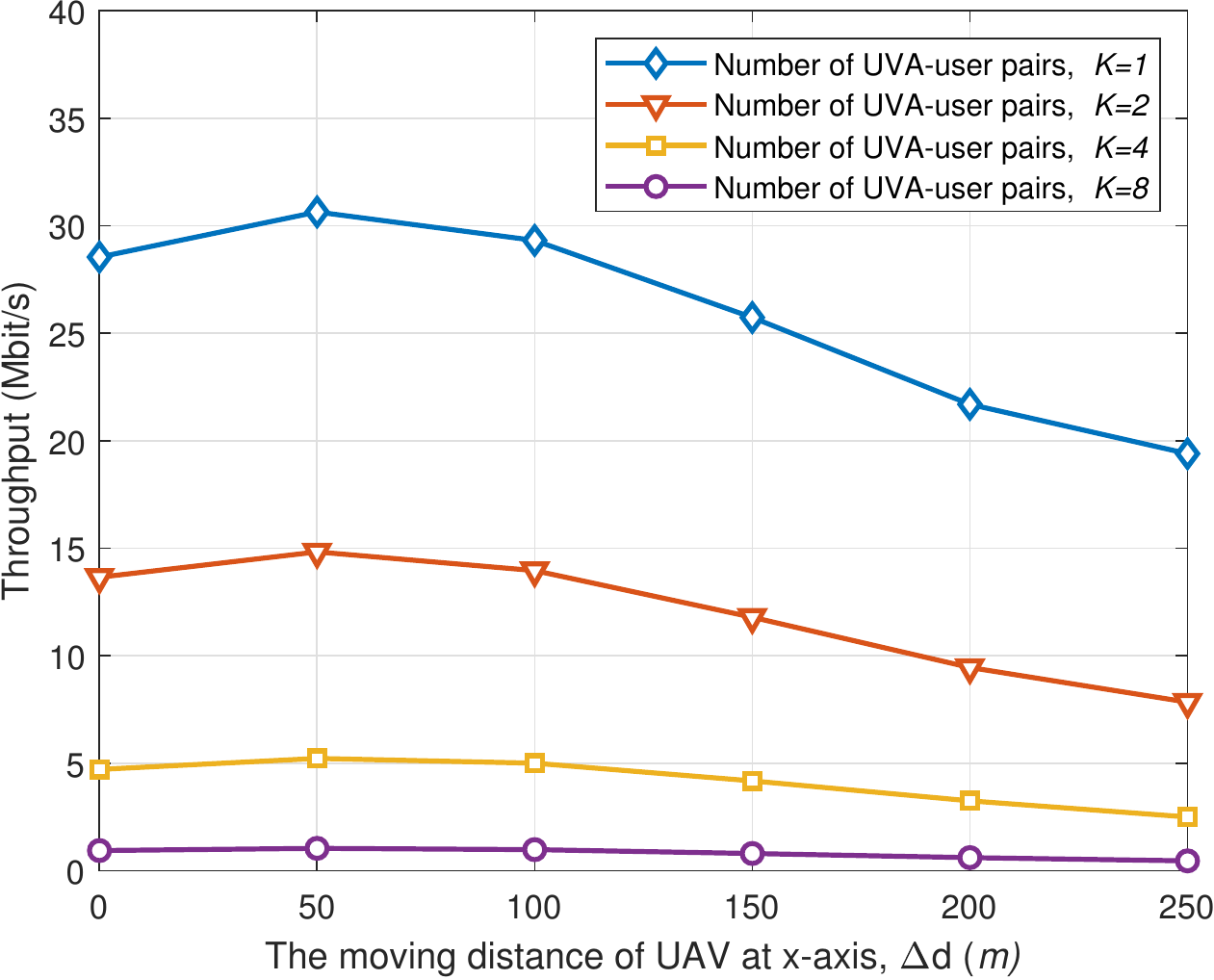}
    		\end{minipage}
    	}
    	\caption{SNR, transmit power, and throughput versus the moving distance of the UAV-user pair at x-axis along the same direction.} 
    	\label{S2}
\end{figure*} 

\begin{figure*}[t]
\small
      \centering
    	\subfigure[Accuracy vs. number of UAV-user pairs]{
    		\begin{minipage}[b]{0.41\textwidth}
    			\includegraphics[width=1\textwidth]{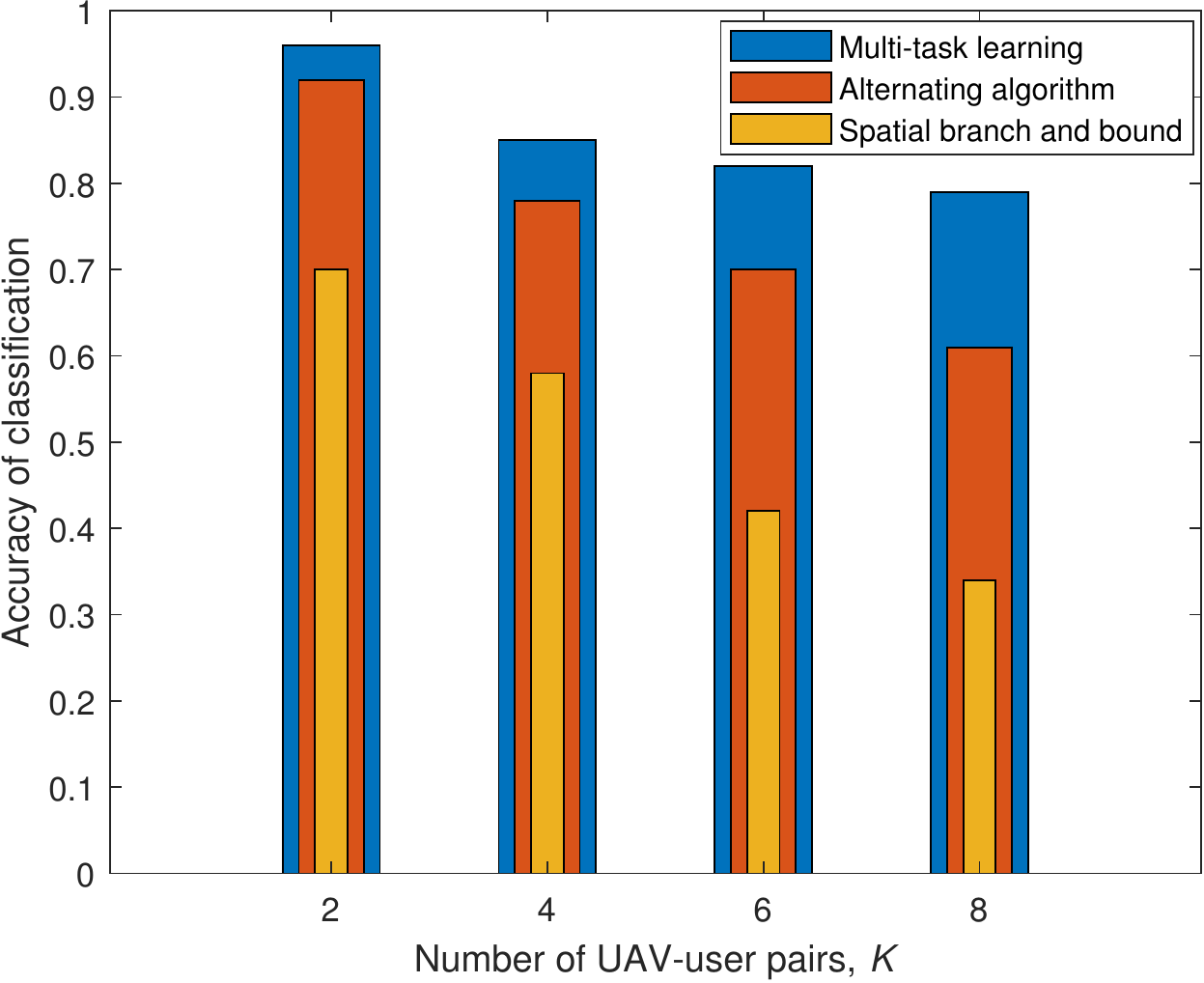}
    		\end{minipage}
    	}
    	\hfill
    	\subfigure[MSE vs. number of UAV-user pairs]{
    		\begin{minipage}[b]{0.41\textwidth}
    			\includegraphics[width=1\textwidth]{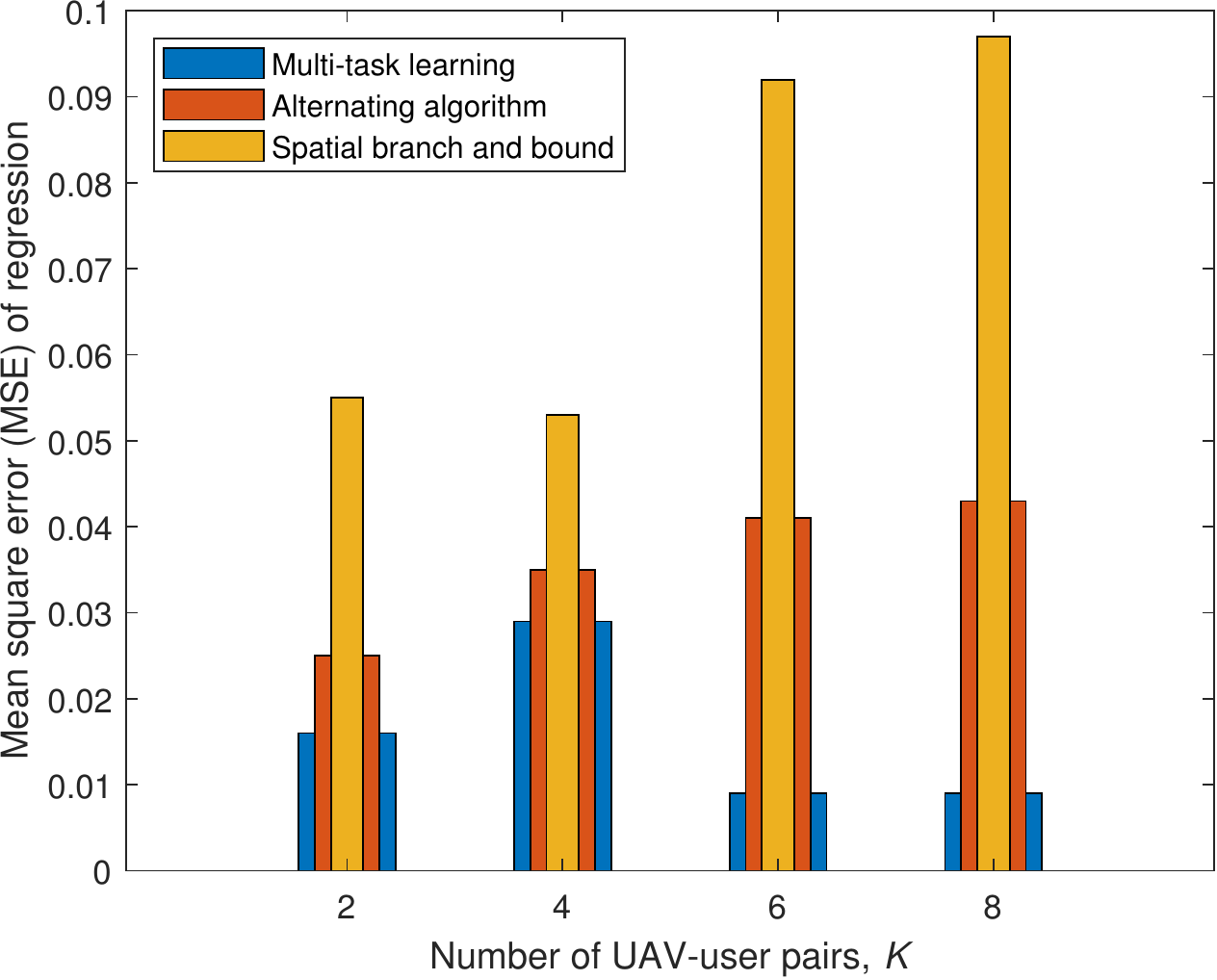}
    		\end{minipage}
    	}
    	\caption{The accuracy of classification and MSE of regression with multi-task learning, alternating algorithm, and spatial branch and bound.} 
    	\label{P1}
    \end{figure*} 
    
\begin{figure*}[t]
      \centering
    	\subfigure[Accuracy vs. percentage of training samples]{
    		\begin{minipage}[b]{0.41\textwidth}
    			\includegraphics[width=1\textwidth]{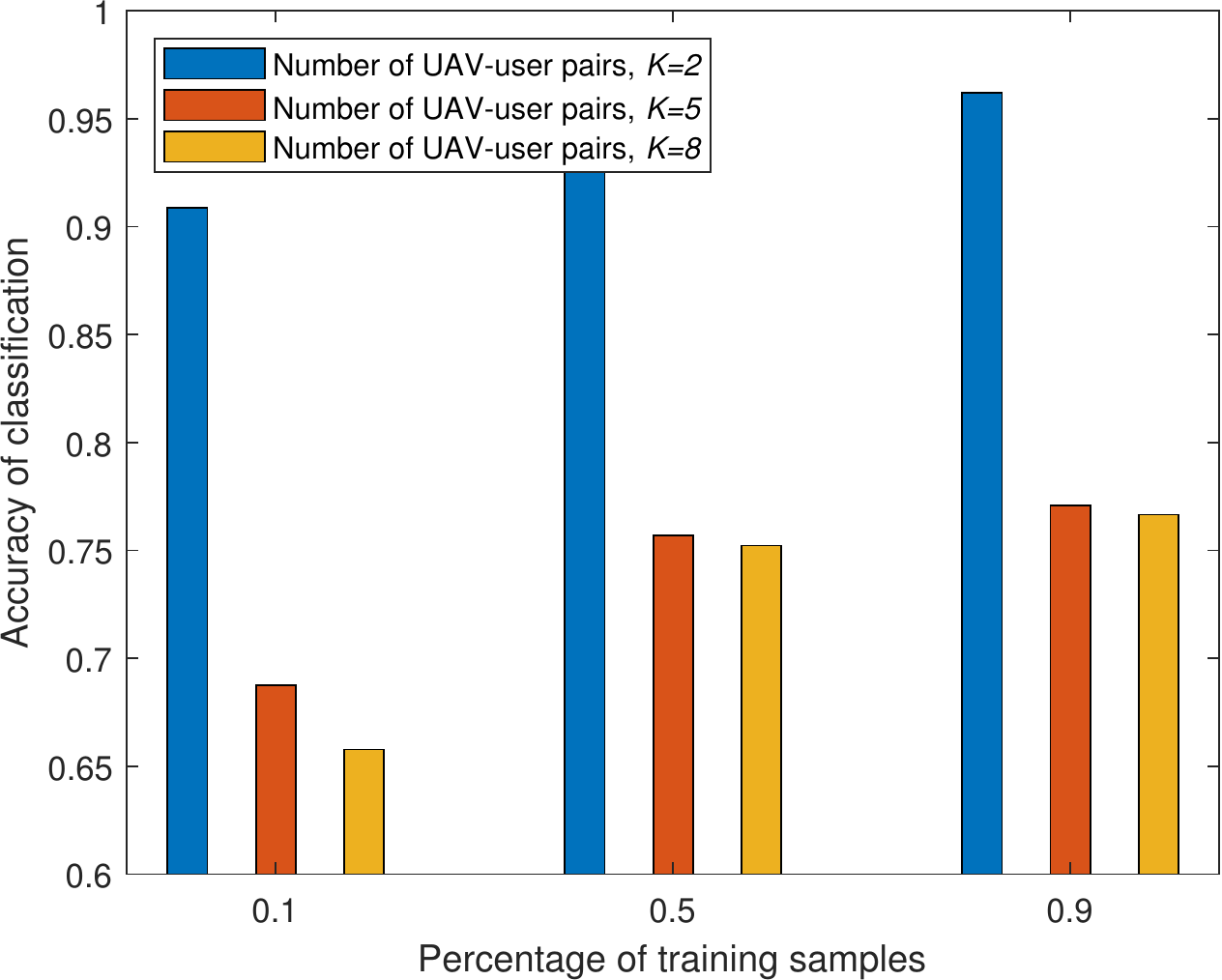}
    		\end{minipage}
    	}
        \hfill
    	\subfigure[MSE vs. percentage of training samples]{
    		\begin{minipage}[b]{0.41\textwidth}
    			\includegraphics[width=1\textwidth]{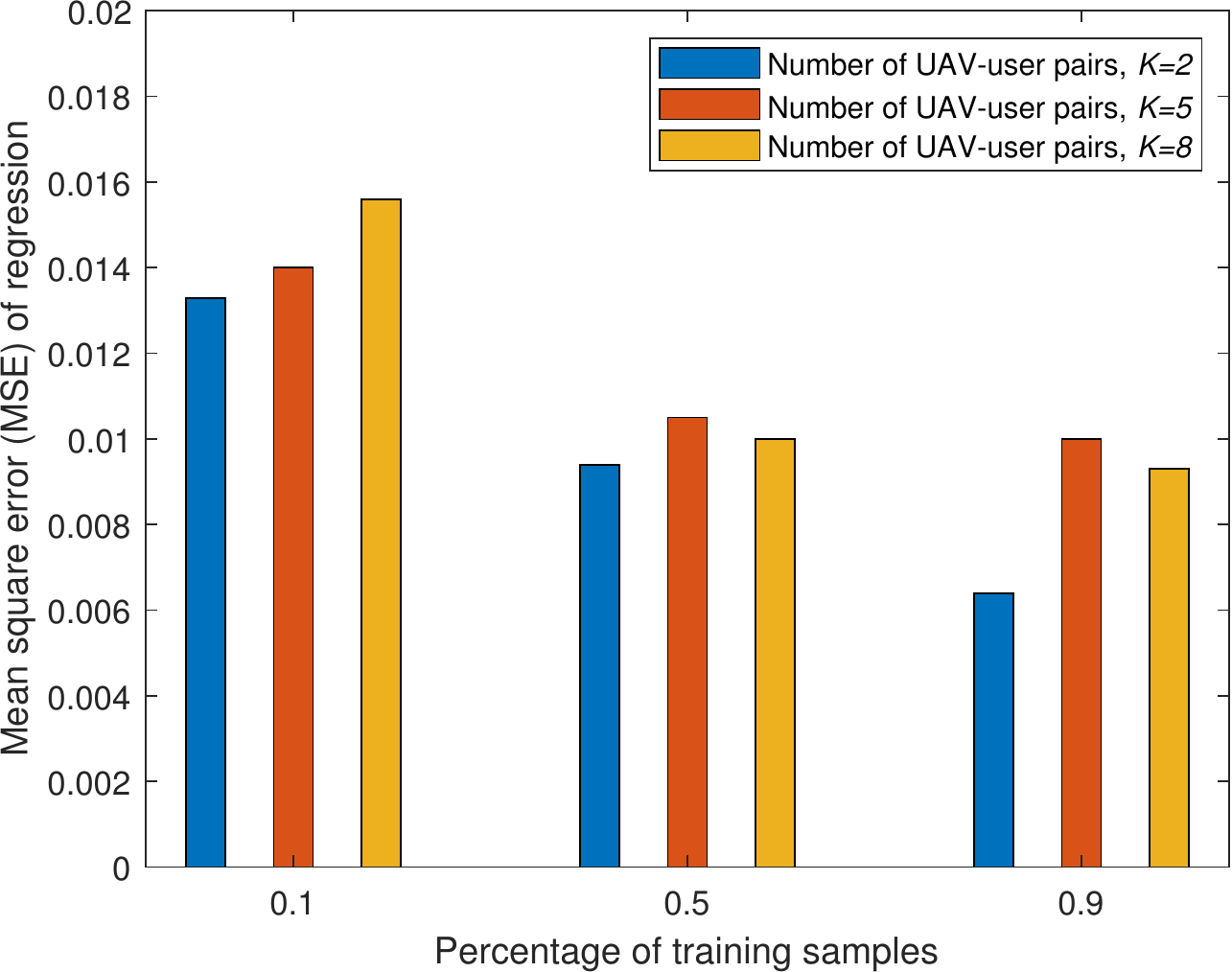}
    		\end{minipage}
    	}
    	\caption{The accuracy of classification and MSE of regression with the different number of UAV-user pairs.} 
    	\label{P2}
    \end{figure*}

Figure \ref{S2} evaluates the SNR, the transmit power, and the throughput as the UAV-user pairs move (i.e., UAVs move at the x-axis from 0 to 250 m, and users move at the x-axis from 0 to 25 m). Figure \ref{S2} (a) first shows that the SNR of Pair $\#$1 changes when the number of RIS groups is 1, 2, 4, and 8. We can see that the SNR has an apparent decrease as the number of RIS groups increases because the fewer RIS elements can be allocated to Pair $\#$1. Then, Fig. \ref{S2} (a) shows that the SNR varies as the moving distance ($\Delta d$) of Pair $\#$1 increases from 0 to 250 m, where the SNR shows a slight increase and then a minor decrease as the moving distance of Pair $\#$1 increases. Figure \ref{S2} (b) indicates that the transmit power changes as the number of RIS groups and the moving distance of Pair $\#$1. By observing Figs. \ref{S2} (a) and (b), it can be seen that an opposite changing trend happens on the transmit power compared to Fig. \ref{S2} (a), and we also see that the SNR is mainly affected by the size of the RIS groups, and then by the moving distance of Pair $\#$1. Finally, Fig. \ref{S2} (c) shows that the system throughput changes as the number of UAV-user pairs and the moving distance of the UAV-user pairs. With the moving distance of UAV-user pairs increasing from 0 to 250 m, the system throughput first slightly climb up and then has a minor descending. This is because that the UAV-user pairs are first close to the RIS and then gradually apart from the RIS. Besides, the system throughput decreases as the number of UAV-user pairs increases when the maximum number of the RIS groups is limited, which can be explained that the direct link of the UAV-user pairs without the assistance of the RIS severely reduce the performance of aerial-terrestrial communications system.

Figure \ref{P1} evaluates that the accuracy of the classification and the mean square error (MSE) of regression as the number of UAV-user pairs varies, where the alternating algorithm and the conventional spatial branch and bound (sBB) are regarded as the benchmarks to compare with the proposed MTL. Figure \ref{P1} (a) shows that the accuracy of classification varies with the number of UAV-user pairs for three methods. First, we can see that the accuracy of MTL classification outperforms the others as the number of UAV-user pairs increases. It is also observed that the classification accuracy of all methods decreases as the number of UAV-user pairs increases. Compared with the sBB and the alternating algorithm that decline significantly, MTL has only a minor decline. Figure \ref{P1} (b) shows that the MSE varies with the number of UAV-user pairs for three methods. It can be seen that the MSE of MTL outperforms the other two methods as the number of UAV-user pairs increases. Besides, Table \ref{IF} shows the inference time of each sample for sBB and MTL. Compared to sBB, the inference time of MTL significantly declines about four orders of magnitude of sBB. Therefore, MTL effectively speeds up the computation time with high inference accuracy, thereby improving the performance of RIS-assisted aerial-terrestrial communications.

Figure \ref{P2} evaluates the accuracy of the classification and the MSE of regression as the percentage of training samples and the number of UAV-user pairs vary. Figure \ref{P2} (a) shows that the accuracy of classification changes as the percentage of training samples when the number of UAV-user pairs is 2, 5, and 8. It can be seen that the accuracy of MTL classification is best at $k=2$, followed by $k=5$ and $k=8$, i.e., the accuracy declines as the number of UAV-user pairs increases. We also see that the accuracy of MTL classification increases as the percentage of training samples increases from $10\%$ to $90\%$, and the biggest gap of the accuracy exists when the number of UAV-user pairs is $8$. Figure \ref{P2} (b) shows that the MSE changes as the percentage of training samples and the number of UAV-user pairs. It can be seen that the MSE decreases as the percentage of training samples increases, and the biggest MSE gap is shown when the percentage of training samples is set $10\%$ and $90\%$ for two UAV-user pairs. 

\begin{table*}[t] 
		\small 
		\centering
			\renewcommand{\arraystretch}{1.2}
			\captionsetup{font={small}} 
			\caption{\scshape Inference Time Comparison} 
			\label{IF}
			\centering  
			\begin{tabular}{| c | c | c | c | c | c | c | c | c | }  
				\shline
                \diagbox{Method}{Inference time ($ms$)}{$K$} & 2 & 3 & 4 & 5 & 6 & 7 & 8\\
				\shline 
				sBB & 3.2  & 5.8 & 3.5 & 5.4 & 4.8 & 4.1 & 3.8 \\ 
			    \hline 
			    MTL & 0.0145 & 0.0146 & 0.0151 & 0.0155 & 0.0161 & 0.0168 & 0.0172\\ 
			    \shline
			\end{tabular}  
	\end{table*}

\section{Conclusion}\label{sec7}
In this paper, we investigated an RIS-assisted multi-user downlink aerial-terrestrial communication system. To improve the coverage and the link performance of UAV-user pairs, we proposed a frame-based RIS-assisted transmission protocol, especially when the LoS wireless links are blocked or seriously deteriorated. In the proposed protocol, the RIS controller can adaptively make a transmission strategy for all UAV-user pairs by addressing the following issues: the RIS elements allocation and the RIS phase shifts configuration. To address these two issues, we formulated a transmission strategy optimization problem to maximize the system capacity. To speed up a solution to the proposed MINLP problem, we trained and then deployed a multi-task learning model at the RIS controller instead of the conventional mathematical optimization methods. Simulation results show that the system throughput of RIS-assisted aerial-terrestrial communications is improved by hundreds-fold compared to that without the RIS. In addition, benefiting from multi-task learning, the computation time solving the MINLP at the RIS controller is decreased by four orders of magnitude compared to the conventional mathematics method, and the inference performance is also significantly improved. To realise the full potential of UAV's mobile controllability, the proposed RIS-assisted transmission strategy will be deeply explored considering the trajectory and deployment of UAVs in our future work.

\end{document}